\newcommand{\F}{\mathbb{F}}
\newcommand{\wt}{\mbox{wt}}
\newcommand{\gen}{\mbox{gen}}
\newcommand{\Aut}{\mbox{Aut}}
\newtheorem{theorem}{Theorem}[section]
\newtheorem{lemma}[theorem]{Lemma}
\newtheorem{proposition}[theorem]{Proposition}
\newtheorem{corollary}[theorem]{Corollary}
\newenvironment{proof}{\noindent {\em Proof.}}{\hspace*{\fill} $\Box $\newline}
\begin{document}

\newcommand{\comment}[1]{} 

\title{ Classification of Binary Self-Dual [48,24,10] Codes with an Automorphism of Odd Prime Order}

\author{
Stefka Bouyuklieva \\
Faculty of Mathematics and Informatics \\
 Veliko Tarnovo University \\
5000 Veliko Tarnovo, Bulgaria \\
{Email: \tt stefka@uni-vt.bg}\\
\\
Nikolay Yankov \\
Faculty of Mathematics and Informatics \\
Shumen University \\
 9700 Shumen, Bulgaria \\
{Email: \tt jankov$\_$niki@yahoo.com} \\
and \\
Jon-Lark Kim \\
 Department of Mathematics \\
 University of Louisville \\
  Louisville, KY 40292, USA, and \\
Department of Mathematics, POSTECH \\
Pohang, Gyungbuk 790-784, South Korea \\
{Email: \tt jl.kim@louisville.edu} \\
}

\date{Jan. 19, 2012}

\maketitle

\begin{abstract}
The purpose of this paper is to complete the classification of
 binary self-dual $[48,24,10]$ codes with an automorphism of odd prime order.  We prove that if there is a self-dual $[48, 24, 10]$ code with an automorphism  of type $p$-$(c,f)$ with $p$ being an odd prime, then $p=3, c=16, f=0 $. By considering only an automorphism of type $3$-$(16,0)$, we prove that there are exactly $264$ inequivalent self-dual $[48, 24, 10]$ codes with an automorphism of odd prime order, equivalently, there are exactly $264$ inequivalent cubic self-dual $[48, 24, 10]$ codes.
\end{abstract}

{\bf{Key Words:}} automorphism groups, cubic self-dual codes, self-dual codes

{\bf AMS subject classification}: 94B05

\section{  Introduction}

A linear $[n,k]$ code over $\mathbb F_q$ is a $k-$dimensional subspace of $\mathbb F_q^n$. A linear code $C$ is called {\em self-dual} if it is equal to its dual $C^{\perp}=\{x \in \mathbb F_q^n~|~ x \cdot c = 0 {\mbox{ for any }} c \in C\}$. The classification of binary self-dual codes was initiated and done up to lengths $20$ by V. Pless~\cite{PlessSO}. Then the classification of self-dual codes has been one of the most active research topics (see \cite{RS}, \cite{Huf2005}). The classification of binary self-dual $[38, 19, 8]$ codes has been recently done by Aguilar-Melchor, Gaborit, Kim, Sok, and Sol\'{e}~\cite{AGKSS} and independently by Betsumiya, Harada and Munemasa \cite{BetHarMun}. Very recently,
Bouyuklieva and Bouyukliev~\cite{BouBou} have classified all binary self-dual $[38, 19]$ codes.

In this paper, we are interested in the classification of binary
self-dual $[48, 24, 10]$ codes with an automorphism of odd prime
order. It was motivated by the following reasons. Bonnecaze, et.
al.~\cite{BonBraDouNocSol} constructed binary self-dual codes with a
fixed free automorphism of order $3$, called {\em cubic self-dual
codes} due to the correspondence with self-dual codes over a ring
$\mathbb F_2[Y]/(Y^3-1)$. They gave a partial list of binary cubic
self-dual codes of lengths $\le 72$ by combining binary self-dual
codes and Hermitian self-dual codes. Later, Han, et.
al.~\cite{HanKimLeeLee} have given the classification of binary
cubic optimal self-dual codes of length $6k$ where $k=1,2, \dots,7$.
Hence it is natural to ask exactly how many binary cubic self-dual
optimal $[48, 24, 10]$ codes exist and we answer it in this paper.
On the other hand, we have noticed that Huffman~\cite[Table
2]{Huf2005} listed all possible values of the type $p$-$(c,f)$ with
$p$ odd for an automorphism of a self-dual $[48,24,10]$ code. They
are $11$-$(4,4)$, $7$-$(6,6)$, $5$-$(8,8)$, $3$-$(14, 6)$, and
$3$-$(16,0)$. We will show that the first four types are not
possible. Therefore, the classification of binary self-dual $[48,
24, 10]$ codes with the last type $3$-$(16,0)$ coincides with that
of binary cubic self-dual codes.

\medskip

There are two possible weight enumerators for self-dual
$[48, 24, 10]$ codes \cite{C-S}:
\begin{eqnarray}
  W_{48,1}(y) &=&1 + 704y^{10} + 8976y^{12} +56896y^{14} + 267575y^{16}+\dots\label{one}\\
  W_{48,2}(y) &=&1 + 768y^{10} + 8592y^{12} +57600y^{14} + 267831y^{16} +\dots \label{two}.
\end{eqnarray}
     Brualdi and Pless \cite{Brualdi-Pless} found a [48,24,10] code with weight enumerator $W_{48,1}$.
     The order of its group of automorphisms is 4. A code with weight enumerator $W_{48,2}$ is given in \cite{C-S}.

    The first author~\cite{Bou97} showed that any code with $W_{48,1}(y)$  has no automorphism of odd prime order and that any code with $W_{48,2}(y)$ has a group of automorphisms of order $2^l 3^s$ for some integers $l\ge 0$ and $s\ge 0$. However, this result has received less attention and hence we will include it briefly. In this paper, we prove that if there is a self-dual $[48, 24, 10]$ code with an automorphism  of type $p$-$(c,f)$ with $p$ being an odd prime, then $p=3, c=16, f=0 $. Therefore by considering only an automorphism of type $3$-$(16,0)$, we prove that there are exactly $264$ inequivalent self-dual $[48, 24, 10]$ codes with an automorphism of odd prime order. To do that we apply the method for constructing binary self-dual codes
possessing an automorphism of odd prime order (see \cite{Huff},
\cite{Yorgov56}, \cite{Yorus}).

\section{Construction method}

Let $C$ be a binary self-dual code of length $n$ with an
automorphism $\sigma $ of prime order $p\geq 3$ with exactly $c$
independent $p$-cycles and $f=n-cp$ fixed points in its
decomposition. We may assume that
\begin{equation}\label{sigma}
\sigma=(1,2,\cdots,p)(p+1,p+2,\cdots,2p)\cdots((c-1)p+1,(c-1)p+2,\cdots,cp),
\end{equation}
and shortly say that $\sigma$ is of \emph{type} $p$-$(c,f)$.

 We begin with a theorem which gives a useful restriction for the type of the
automorphism.

 \begin{theorem}{\rm(\cite{Yorus})}\label{aut_type}
  Let $C$ be a binary self-dual $[n,n/2,d]$ code with
 an automorphism of type $p$-$(c,f)$ where $p$ is an odd prime. Denote
    $g(k)=d+\lceil \frac{d}{2}\rceil +\cdots +\lceil \frac{d}{2^{k-1}}\rceil$.
     Then:

       (i) $pc\ge g(\frac{(p-1)c}{2})$ and if $d\le 2^{(p-1)c/2 -2}$
     the equality does not occur;

      (ii) if $f>c$ then $f\ge g(\frac{f-c}{2} )$ and if
     $d\le 2^{(f-c)/2 -2}$ the equality does not occur;

     (iii) if $2$ is a primitive root modulo $p$ then $c$ is even.
\end{theorem}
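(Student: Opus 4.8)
The plan is to prove the structural constraints on the automorphism type by analyzing the two subcodes that arise from the fixed-point/cycle decomposition. Following the standard construction method referenced in the excerpt, I would decompose $C$ under the action of $\sigma$ into the subcode $F_\sigma(C)$ of vectors fixed by $\sigma$ and the subcode $E_\sigma(C)$ of vectors whose coordinate-sum on each $p$-cycle is zero. A vector in $F_\sigma(C)$ is constant on each cycle, so it is naturally a vector of length $c+f$; a vector in $E_\sigma(C)$ is determined by its values on the cycles, where each cycle contributes a cyclic block of length $p$ with zero coordinate-sum. The key distance estimates come from tracking how the minimum distance $d$ of $C$ forces lower bounds on the effective lengths of these two projected codes.

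The main engine is the bound on minimum distances under repeated puncturing/residue arguments, which is what produces the function $g(k)=d+\lceil d/2\rceil+\cdots+\lceil d/2^{k-1}\rceil$. For part (i), I would restrict attention to $E_\sigma(C)$ and identify it, via the cyclotomic structure of $\mathbb{F}_2[x]/(x^p-1)$, with a code over the extension field $\mathbb{F}_{2^s}$ where $s$ is the multiplicative order of $2$ modulo $p$. The dimension $(p-1)c/2$ enters because the cyclic part splits into pieces on which a Hermitian/self-orthogonality condition halves the available dimension. Applying the iterated bound to this code of length $pc$ and the appropriate dimension yields $pc\ge g((p-1)c/2)$, with the strict inequality whenever $d$ is small relative to $2^{(p-1)c/2-2}$ by a standard Griesmer-type refinement. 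For part (ii), the same iterated argument is applied instead to the fixed subcode, whose length is $c+f$ but whose relevant free dimension involves $f-c$, giving $f\ge g((f-c)/2)$ under the hypothesis $f>c$.

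Part (iii) is a purely algebraic divisibility observation: when $2$ is a primitive root modulo $p$, the polynomial $x^p-1$ factors over $\mathbb{F}_2$ as $(x-1)$ times a single irreducible polynomial of degree $p-1$, so $\mathbb{F}_2[x]/((x^p-1)/(x-1))\cong\mathbb{F}_{2^{p-1}}$ is a field. The self-duality of $C$ forces $E_\sigma(C)$, viewed over this field, to be Hermitian self-dual, and a Hermitian self-dual code exists only in even dimension (its length over the field must be even). Since that length is exactly $c$, we conclude $c$ is even.

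The hard part will be setting up the correspondence in part (i) cleanly enough that the iterated minimum-distance bound applies to the right code with the right dimension, since one must be careful that puncturing the extension-field code and pulling the bound back to the binary length $pc$ preserves the chain $d,\lceil d/2\rceil,\dots$ exactly. The strictness clause — that equality cannot occur when $d\le 2^{(p-1)c/2-2}$ — requires the sharper form of the residual-code argument rather than the crude bound, and that is where the delicate estimation lives; the other parts are comparatively routine once the decomposition is in place.
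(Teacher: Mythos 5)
The paper itself does not prove this theorem --- it quotes it from Yorgov \cite{Yorus} --- so your sketch must be measured against the standard argument there. Your overall frame (the decomposition $C=F_\sigma(C)\oplus E_\sigma(C)$, an iterated Griesmer-type bound on each piece, and the field structure of ${\cal P}$ for (iii)) is the right one, and part (iii) is essentially correct: when $2$ is a primitive root modulo $p$, the code $C_\varphi$ is self-dual of length $c$ over the field ${\cal P}\cong\mathbb{F}_{2^{p-1}}$, so $c=2\dim_{\cal P}C_\varphi$ is even. But there are concrete gaps in (i) and (ii). In (i) you have built an unnecessary detour that creates your own ``hard part'': the function $g$ is just the binary Griesmer bound, and it should be applied directly to the \emph{binary} code $E_\sigma(C)^*$, which is a $[pc,\,(p-1)c/2,\,\ge d]$ code. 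Its dimension comes from the decomposition theorem ($\dim E_\sigma(C)=c(p-1)/2$), and its minimum distance is at least $d$ because every vector of $E_\sigma(C)$ has even weight on each singleton $\Omega_{c+i}$, i.e.\ vanishes on all fixed points, so deleting those $f$ coordinates preserves weights. No identification with a code over $\mathbb{F}_{2^s}$, and no ``pulling the chain back through puncturing,'' is needed; the delicacy you flag is an artifact of the wrong setup. In (ii) you never supply the step that actually produces the dimension $(f-c)/2$: the bound cannot be applied to $C_\pi$ or $F_\sigma(C)$ wholesale, because $\pi$ is not weight-preserving (a cycle coordinate stands for $p$ binary coordinates). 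One must pass to the subcode of $C_\pi$ vanishing on the first $c$ coordinates, which has dimension at least $(c+f)/2-c=(f-c)/2$, observe that its projection onto the last $f$ coordinates has minimum weight $\ge d$ (such vectors pull back to codewords of $C$ supported only on fixed points), and apply the bound to that $[f,\,\ge(f-c)/2,\,\ge d]$ code. Saying the ``relevant free dimension involves $f-c$'' is not an argument.

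The second, more serious gap is the strictness clauses, which you defer to ``the sharper form of the residual-code argument.'' As stated, that plan fails: for arbitrary binary codes, equality in the Griesmer bound with $d\le 2^{k-2}$ \emph{does} occur --- $\mathbb{F}_2^3$ is a $[3,3,1]$ code with $g(3)=3$ and $d=1\le 2$, and the even-weight $[5,4,2]$ code meets $g(4)=5$ with $d=2\le 4$. So no refinement of the generic residual-code recursion alone can exclude equality; the exclusion in Yorgov's lemma exploits the additional structure of the codes arising here, namely that they are self-orthogonal (the subcode of a self-dual code supported on a fixed coordinate set is self-orthogonal, and $E_\sigma(C)^*$ is self-orthogonal), hence have all weights even. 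That ingredient appears nowhere in your sketch, and until you explain how self-orthogonality is threaded through the recursion to rule out the terminal equality cases, the strict inequalities in (i) and (ii) remain unproved.
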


Applying the theorem for the parameters $n=48$ and $d=10$, we
obtain

\begin{corollary}\label{cor:types1}
Any putative automorphism of an odd prime order for a singly-even
self-dual $[48,24,10]$ code is of type $47$-$(1,1)$, $23$-$(2,2)$,
$11$-$(4,4)$, $7$-$(6,6)$, $5$-$(8,8)$, $3$-$(12,12)$,
$3$-$(14,6)$, or $3$-$(16,0)$.
\end{corollary}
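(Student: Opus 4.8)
The statement is a finite check, so the plan is to enumerate every admissible pair $(p,c)$ and eliminate all but the eight listed types using the three parts of Theorem~\ref{aut_type}. Writing $f = 48 - cp$, an automorphism of type $p$-$(c,f)$ requires $c \geq 1$ and $f \geq 0$, so $p$ ranges over the odd primes at most $48$, namely $p \in \{3,5,7,11,13,17,19,23,29,31,37,41,43,47\}$, and for each such $p$ the number of cycles satisfies $1 \leq c \leq \lfloor 48/p \rfloor$. This leaves only finitely many candidate pairs, which I would test one prime at a time. (Note that since $d = 10 \not\equiv 0 \pmod 4$ every such code is automatically singly-even, so no separate argument is needed for that adjective.)

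First I would record the values of $g$ for $d = 10$: a short computation gives $g(1)=10$, $g(2)=15$, $g(3)=18$, $g(4)=20$, $g(5)=21$, $g(6)=22$, $g(7)=23$, $g(8)=24$, and $g(k) = k+16$ for every $k \geq 8$ (once $2^{k-1} \geq 10$, each new term contributes $1$). These are the only values of $g$ needed, since the arguments $\tfrac{(p-1)c}{2}$ and $\tfrac{f-c}{2}$ arising below never exceed $23$.

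Next I would apply part (iii). Computing the multiplicative order of $2$ modulo each $p$ shows that $2$ is a primitive root exactly for $p \in \{3,5,11,13,19,29,37\}$ among our primes; for these, part (iii) forces $c$ to be even. This alone eliminates $p = 29$ and $p = 37$, whose only option is $c = 1$, and thins out the remaining even-$c$ possibilities for $p = 3,5,11,13,19$. I would then test part (i), i.e.\ $pc \geq g(\tfrac{(p-1)c}{2})$, on all surviving pairs; this inequality clears away every small-$c$ case (for instance it already removes the lone even value $c=2$ for $p=13$). In the three boundary cases where it holds with equality, $(p,c) = (3,8),(7,4),(31,1)$, the supplementary clause of (i) applies because $d = 10 \leq 2^{(p-1)c/2 - 2}$ holds in each, so equality cannot occur and these pairs are excluded as well.

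The pairs that survive (i) yet do not appear in the final list all carry an excess of fixed points, and the crux is to remove them with part (ii): these are $(p,c) = (3,10),(5,6),(7,5),(17,2),(19,2),(41,1),(43,1)$, each satisfying $f > c$, and in each the required bound $f \geq g(\tfrac{f-c}{2})$ fails — for example $(3,10)$ has $f = 18$ but $g(4) = 20$. Once these are gone, the only remaining pairs are $(47,1),(23,2),(11,4),(7,6),(5,8),(3,12),(3,14),(3,16)$, with fixed-point counts $f = 1,2,4,6,8,12,6,0$ respectively, which are precisely the eight types claimed. I expect the main obstacle to be not any single deep step but the bookkeeping: one must invoke the correct part of the theorem for each pair — in particular recognizing that part (i) by itself fails to exclude the seven ``excess-$f$'' pairs, so that part (ii) is indispensable there — and handle the equality cases of (i) through its second clause.
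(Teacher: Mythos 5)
Your proposal is correct and is essentially the paper's own argument: the paper states the corollary as a direct application of Theorem~\ref{aut_type} with $n=48$, $d=10$, leaving the finite check implicit, and your enumeration (part (iii) for parity of $c$, part (i) with its equality clause for the boundary pairs $(3,8)$, $(7,4)$, $(31,1)$, and part (ii) for the seven excess-fixed-point pairs) carries out exactly that check, with all the stated values of $g$ and all eliminations verified correct.
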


Denote the cycles of $\sigma $ by
    $\Omega_1 =\{ 1,2,\ldots,p\} $, $\Omega _2,\ldots,\Omega _c $, and the fixed points by
    $\Omega_{c+1} =\{ cp+1\} ,\ldots,\Omega_{c+f} =\{ cp+f=n\} $. Define
\begin{eqnarray*}
F_\sigma(C)&=&\{ v \in C\mid \sigma(v)=v\},\\
E_\sigma(C)&=&\{ v\in C\mid\wt(v\vert \Omega_i)\equiv 0\pmod 2, \
i=1,\cdots,c+f\},
\end{eqnarray*}
where $v\vert\Omega_i $ is the restriction of $v$ on $\Omega_i $.

\begin{theorem}\label{directsum}
{\rm(\cite{Huff})} $C=F_\sigma(C)\oplus E_\sigma(C)$,
$\dim(F_\sigma)=\frac{c+f}{2}$,
$\dim(E_\sigma)=\frac{c(p-1)}{2}$.
\end{theorem}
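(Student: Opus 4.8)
The plan is to prove the three assertions in order: first the trivial intersection $F_\sigma(C)\cap E_\sigma(C)=\{0\}$, then the spanning $F_\sigma(C)+E_\sigma(C)=C$, and finally the two dimension formulas, where self-duality enters. For the intersection, suppose $v\in F_\sigma(C)\cap E_\sigma(C)$. Being $\sigma$-fixed forces $v$ to be constant on each cycle $\Omega_i$, so on a $p$-cycle its restriction has weight $0$ or $p$; since $p$ is odd, the even-weight condition coming from $E_\sigma(C)$ rules out weight $p$, and on each fixed point the even-weight condition forces the single entry to vanish. Hence $v=0$, and the sum is direct.

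For the spanning, I would introduce the averaging operator $T=\sum_{i=0}^{p-1}\sigma^i$, a linear map on $\F_2^n$ satisfying $\sigma T=T$, so that $Tv$ is always $\sigma$-fixed. Concretely $Tv$ replaces the restriction of $v$ on each $p$-cycle by the constant vector equal to the parity of $\wt(v\vert\Omega_i)$ and fixes the entries at the $f$ fixed points (here $p$ odd gives $p\equiv 1\pmod 2$). A short parity check then shows $v-Tv$ has even weight on every $\Omega_i$, so $v-Tv\in E_\sigma(\F_2^n)$. Because $C$ is $\sigma$-invariant and linear, $Tv=\sum_i\sigma^i(v)\in C$ whenever $v\in C$; thus $Tv\in F_\sigma(C)$ and $v-Tv\in E_\sigma(C)$, giving $v=Tv+(v-Tv)$. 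Combined with the trivial intersection this yields $C=F_\sigma(C)\oplus E_\sigma(C)$.

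For the dimensions I would compress fixed vectors. Let $\phi\colon F_\sigma(\F_2^n)\to\F_2^{c+f}$ record the constant value of a $\sigma$-fixed vector on each of the $c+f$ cycles; this $\phi$ is a linear isomorphism. Since each $p$-cycle contributes $p\equiv 1\pmod 2$ equal coordinates, $\phi$ preserves the inner product: $u\cdot v\equiv\phi(u)\cdot\phi(v)\pmod 2$ for $u,v\in F_\sigma(\F_2^n)$. Writing $\bar F=\phi(F_\sigma(C))$, self-orthogonality of $F_\sigma(C)\subseteq C=C^{\perp}$ makes $\bar F$ self-orthogonal in $\F_2^{c+f}$, so $\dim F_\sigma(C)=\dim\bar F\le(c+f)/2$.

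The reverse inequality is the crux, and is where self-duality does the real work. Given $\bar w\in\F_2^{c+f}$ orthogonal to $\bar F$, lift it to the $\sigma$-fixed vector $w=\phi^{-1}(\bar w)$, so that $w\perp F_\sigma(C)$. The key observation is that any $\sigma$-fixed $w$ is \emph{automatically} orthogonal to every $z\in E_\sigma(C)$: on each cycle $w$ is constant, so its contribution to $w\cdot z$ is that constant times $\wt(z\vert\Omega_i)$, which is even, while on each fixed point the even-weight condition forces $z$ to vanish. Hence $w\perp E_\sigma(C)$ as well, so $w\perp C$ and $w\in C^{\perp}=C$; being $\sigma$-fixed, $w\in F_\sigma(C)$ and thus $\bar w\in\bar F$. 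Therefore $\bar F$ is self-dual, $\dim F_\sigma(C)=(c+f)/2$, and finally $\dim E_\sigma(C)=\dim C-\dim F_\sigma(C)=\tfrac{cp+f}{2}-\tfrac{c+f}{2}=\tfrac{c(p-1)}{2}$, using $n=cp+f$.
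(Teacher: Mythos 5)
Your proof is correct, and it is essentially the standard argument: the paper itself states Theorem~\ref{directsum} without proof, citing Huffman \cite{Huff}, and Huffman's proof rests on the same two ingredients you use, namely the decomposition $v = Tv + (v - Tv)$ coming from the idempotent $T=\sum_{i=0}^{p-1}\sigma^i$ of the group algebra $\F_2\langle\sigma\rangle$ (valid since $p$ is odd), and the self-duality of $C$ forcing $\pi(F_\sigma(C))$ to be self-dual of dimension $(c+f)/2$, with $\dim E_\sigma(C)$ then obtained by subtraction. All steps check out, including the two points where care is needed: the parity verification that $v-Tv$ has even weight on each cycle, and the observation that a $\sigma$-fixed vector is automatically orthogonal to all of $E_\sigma(C)$, which is what upgrades self-orthogonality of $\pi(F_\sigma(C))$ to self-duality.
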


We have that $v\in F_\sigma (C)$ if and only if $v\in C$ and $v$
is constant on each cycle. The cyclic group generated by
$\sigma$ splits the set of codewords into disjoint orbits which
consists of $p$ or 1 codewords. Moreover, a codeword $v$ is the
only element in a orbit only if $v\in F_\sigma (C)$. Using that
all codewords in one orbit have the same weight, we obtain the
following proposition.

\begin{proposition}\label{weight_distrib}
Let $(A_0,A_1,\dots,A_n)$ and $(B_0,B_1,\dots,B_n)$ be the weight
distributions of the codes $C$ and $F_{\sigma}(C)$, respectively.
Then $A_i\equiv B_i\pmod p$.
\end{proposition}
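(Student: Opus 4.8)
The plan is to exploit that the cyclic group $G=\langle\sigma\rangle$ has prime order $p$ and acts on the set of codewords of $C$, and then to count codewords of a fixed weight by grouping them into $G$-orbits. First I would note that since $\sigma$ is an automorphism of $C$, it maps $C$ bijectively to itself, so $G$ acts on the underlying set of $C$; by the orbit--stabilizer theorem every orbit has size dividing $|G|=p$, and hence is of size either $1$ or $p$. An orbit is a singleton precisely when its codeword is fixed by $\sigma$, i.e. lies in $F_\sigma(C)$, which is exactly the observation already recorded before the statement.

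The second ingredient is that $\sigma$ acts by permuting coordinates, and a coordinate permutation preserves Hamming weight. Consequently the weight function is constant on each $G$-orbit, so for every fixed $i$ the set of weight-$i$ codewords of $C$ is itself a union of $G$-orbits. I would then split this set into the singleton orbits and the orbits of size $p$: the singletons are precisely the weight-$i$ codewords lying in $F_\sigma(C)$, of which there are $B_i$ by definition, while if $k_i$ denotes the number of full orbits of size $p$ among the weight-$i$ codewords, then these contribute $p\,k_i$ codewords.

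Counting the weight-$i$ codewords of $C$ in two ways therefore yields
\[
A_i = B_i + p\,k_i,
\]
and reducing modulo $p$ gives $A_i\equiv B_i\pmod p$, as claimed. There is no real obstacle here: the argument is a direct orbit count, and the only place where primality of $p$ enters essentially is in forcing every orbit to have size $1$ or $p$, so that the non-fixed codewords assemble into blocks of size exactly $p$. The one point I would state carefully is that $F_\sigma(C)$ is genuinely the full set of $\sigma$-fixed codewords, so that no weight-$i$ fixed codeword is missed or double-counted in the singleton part of the tally.
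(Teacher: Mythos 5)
Your proof is correct and follows essentially the same route as the paper, which justifies the proposition by observing (in the text immediately preceding it) that $\langle\sigma\rangle$ splits $C$ into orbits of size $1$ or $p$, that the singleton orbits are exactly the codewords of $F_\sigma(C)$, and that weight is constant on orbits. Your write-up merely makes the final count $A_i = B_i + p\,k_i$ explicit, which is a fine elaboration but not a different argument.
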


Proposition \ref{weight_distrib} eliminates the first two
types from Corollary \ref{cor:types1}. In fact, these cases were eliminated by Huffman~\cite[Appendix]{Huf2005} in a different way.

\begin{corollary}\label{cor:types2}
If $C$ is a self-dual $[48,24,10]$ code, then $C$ does not have
automorphisms of orders $47$ and $23$.
\end{corollary}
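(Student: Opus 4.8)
The plan is to use Proposition \ref{weight_distrib} to rule out the orders $47$ and $23$ by showing that the required congruences $A_i \equiv B_i \pmod p$ cannot be satisfied for the known weight enumerators of a self-dual $[48,24,10]$ code. The key observation is that by Corollary \ref{cor:types1} an automorphism of order $47$ must have type $47$-$(1,1)$, so by Theorem \ref{directsum} the fixed subcode $F_\sigma(C)$ has dimension $(c+f)/2 = (1+1)/2 = 1$; similarly, an automorphism of order $23$ must have type $23$-$(2,2)$, giving $\dim F_\sigma(C) = (2+2)/2 = 2$. Thus in each case $F_\sigma(C)$ is a very small code whose weight distribution $(B_0,\dots,B_n)$ is severely constrained.

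First I would handle $p = 47$. Here $\dim F_\sigma(C) = 1$, so $F_\sigma(C) = \{0, v\}$ for a single nonzero codeword $v$. Since $v$ is fixed by $\sigma$ it is constant on the unique $47$-cycle and arbitrary on the one fixed point; being a codeword of a self-dual (hence even) code it has even weight, so $\wt(v)$ is either $0$, $47$, $1$, or $48$ restricted appropriately — in fact the only even possibilities are $\wt(v) \in \{0, 48\}$ coming from the all-zero and, if present, an all-one vector, or weight $47+1=48$. I would compute $(B_0,\dots,B_n)$ explicitly and then check the congruence $A_i \equiv B_i \pmod{47}$ against both $W_{48,1}$ and $W_{48,2}$ from (\ref{one}) and (\ref{two}). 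The smallest nonzero coefficients $A_{10} = 704$ or $768$ reduce modulo $47$ to nonzero residues, while $B_{10} = 0$ since $F_\sigma(C)$ contains no weight-$10$ vector; this contradiction finishes the case.

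Next I would treat $p = 23$ identically: with $\dim F_\sigma(C) = 2$ the fixed subcode has at most four codewords, all constant on each of the two $23$-cycles, so their possible weights lie in $\{0,23,24,46,47,48\}$ together with even-weight restrictions, and in particular $B_{10}=B_{12}=B_{14}=0$. Comparing with $A_{10}, A_{12}, A_{14}$ reduced modulo $23$ yields nonzero residues (for instance $704 \equiv 13$, $768 \equiv 9 \pmod{23}$), again contradicting $A_i \equiv B_i \pmod{23}$. The main routine obstacle is simply bookkeeping the admissible weights of the tiny fixed code and verifying the nonvanishing of the relevant residues, but there is no genuine difficulty: the point is that for such large primes the fixed subcode is far too small to carry the mandated minimum-weight codewords, so the congruence fails immediately at $i = 10$.
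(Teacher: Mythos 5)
Your proposal is correct and follows essentially the same route as the paper's proof: apply Proposition~\ref{weight_distrib}, note that the structure of the fixed subcode forces $B_{10}=0$ in both cases (for $47$-$(1,1)$ the fixed code is the repetition $[48,1,48]$ code; for $23$-$(2,2)$ all nonzero weights of $F_\sigma(C)$ are at least $24$), and verify that $704$ and $768$ are nonzero modulo $47$ and modulo $23$. The only blemish is a trivial arithmetic slip ($704\equiv 14$, not $13 \pmod{23}$), which is immaterial since only the nonvanishing of the residues is needed.
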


\begin{proof}
Let $\sigma$ be an automorphism of $C$. If $\sigma$ is of type
$47$-$(1,1)$ then $F_{\sigma}(C)$ is the repetition [48,1,48] code
and therefore $B_{10}=0$. Since neither 704 nor 768 is congruent
0 modulo 47, this case is not possible.

If the type is $23$-$(2,2)$ then $B_i=0$ for $0<i<24$. Therefore
$B_{10}=0$ and $A_{10}\not\equiv B_{10}\pmod{23}$ - a
contradiction.
\end{proof}

To understand the structure of a self-dual code $C$ invariant
under the permutation (\ref{sigma}), we define two maps.  The
first one is the projection map $\pi :F_\sigma (C)\to \F_{2}^{c+f}
$ where $(\pi(v))_i =v_j $ for some $j\in\Omega_i$,
$i=1,2,\ldots,c+f $, $v\in F_\sigma (C)$.

Denote by $E_\sigma(C)^{*}$ the code $E_\sigma(C)$ with the last
$f$ coordinates deleted. So $E_\sigma(C)^{*}$ is a self-orthogonal
binary code of length $pc$. For $v$ in $E_\sigma (C)^*$ we let
$v\vert\Omega_i=(v_0,v_1,\cdots,v_{p-1})$ correspond to the
polynomial $v_0+v_1 x+\cdots+v_{p-1}x^{p-1}$ from ${\cal P}$,
where ${\cal P}$ is the set of even-weight polynomials in $\F_2
[x]/(x^p-1)$. ${\cal P}$ is a cyclic code of length $p$ with
generator polynomial $x-1$. Moreover, if 2 is a primitive root
modulo $p$, ${\cal P}$ is a finite field with $2^{p-1}$ elements
\cite{Huff}. In this way we obtain the map
$\varphi:E_\sigma(C)^{*}\to {\cal P}^c $. The following theorems
give necessary and sufficient conditions a binary code with an
automorphism of type (\ref{sigma}) to be self-dual.

\begin{theorem}{\rm (\cite{Yorus})}\label{thm2}
A binary $[n,n/2]$ code $C$ with an automorphism $\sigma$ is
self-dual if and only if the following two conditions
hold:

$(i)$ $C_{\pi}=\pi(F_\sigma(C))$ is a binary self-dual code of
length $c+f$,

$(ii)$ for every two vectors $u, v\in
C_{\varphi}=\varphi(E_{\sigma}(C)^{*})$ we have
$\sum\limits_{i=1}^{c} u_i(x)v_i(x^{-1})=0$.
\end{theorem}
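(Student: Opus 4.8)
The plan is to reduce the self-duality of $C$ to the self-orthogonality of its two constituent subcodes $F_\sigma(C)$ and $E_\sigma(C)$, and then to re-express each of these via the maps $\pi$ and $\varphi$. Since $\dim C = n/2$, the code $C$ is self-dual precisely when it is self-orthogonal, so throughout I would work with self-orthogonality. By Theorem \ref{directsum} we have the internal direct sum $C = F_\sigma(C)\oplus E_\sigma(C)$, so the first step is to observe that these two pieces are mutually orthogonal. For $u\in F_\sigma(C)$ and $w\in E_\sigma(C)$, a fixed point forces the corresponding coordinate of $w$ to be $0$ (a length-one cycle of even weight), while on a genuine $p$-cycle $u$ is constant and $w$ has even weight, so the cycle-wise contribution to $u\cdot w$ vanishes in $\F_2$; summing gives $u\cdot w=0$. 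Consequently $C$ is self-orthogonal if and only if both $F_\sigma(C)$ and $E_\sigma(C)$ are self-orthogonal, and it remains to match these two conditions to $(i)$ and $(ii)$.

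For condition $(i)$, I would check that $\pi$ restricted to $F_\sigma(C)$ is an isomorphism onto $C_\pi$ that preserves the inner product. A vector of $F_\sigma(C)$ is determined by its constant value on each cycle together with its values at the fixed points, so $\pi$ is injective and $\dim C_\pi=\dim F_\sigma(C)=\frac{c+f}{2}$. For $u,u'\in F_\sigma(C)$, the restriction to a $p$-cycle contributes $p\,a_ib_i=a_ib_i$ in $\F_2$ because $p$ is odd, and each fixed point contributes $a_ib_i$ directly; hence $u\cdot u'=\pi(u)\cdot\pi(u')$. Thus $F_\sigma(C)$ is self-orthogonal exactly when $C_\pi$ is, and since $C_\pi$ is a code of length $c+f$ and dimension $\frac{c+f}{2}$, self-orthogonality coincides with self-duality. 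This is precisely $(i)$.

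For condition $(ii)$, I would first note that deleting the $f$ fixed coordinates is harmless, since every vector of $E_\sigma(C)$ is zero there, so $E_\sigma(C)^{*}$ carries the same inner products and $\varphi$ is an isometric bijection onto $C_\varphi$. The key identity to establish is that, writing $w=\varphi^{-1}(u)$ and $w'=\varphi^{-1}(v)$, the coefficient of $x^k$ in $\sum_{i=1}^{c}u_i(x)v_i(x^{-1})$ equals the inner product $w\cdot\sigma^k(w')$; in particular its constant term is $w\cdot w'$. The requirement in $(ii)$ that this polynomial vanish identically therefore says $w\cdot\sigma^k(w')=0$ for all $k$. Now $E_\sigma(C)$ is invariant under $\sigma$ (as $\sigma$ is an automorphism of $C$ permuting coordinates within each cycle, it preserves both membership in $C$ and the even-weight condition), so $\sigma^k(w')\in E_\sigma(C)$; hence self-orthogonality of $E_\sigma(C)$ already forces every shifted inner product to vanish, while conversely the $k=0$ term of $(ii)$ gives ordinary self-orthogonality. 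This makes $(ii)$ equivalent to the self-orthogonality of $E_\sigma(C)$, completing the chain of equivalences.

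The step I expect to be most delicate is this last correlation identity: one must track the cyclic index shift carefully to confirm that multiplication by $x^{-1}$ (reversal) in $\F_2[x]/(x^p-1)$ realizes exactly the action of $\sigma$ on each cycle, so that the full vanishing of the polynomial, and not merely its constant term, is what is captured. Everything else is bookkeeping with the cycle decomposition; the genuine content is that the $\sigma$-invariance of $E_\sigma(C)$ upgrades plain self-orthogonality to the stronger-looking polynomial condition of $(ii)$.
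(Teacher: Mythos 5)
Your proof is correct, and it is essentially the standard argument: the paper itself states Theorem~\ref{thm2} with a citation to \cite{Yorus} and gives no proof, and your route (mutual orthogonality of $F_\sigma(C)$ and $E_\sigma(C)$, the parity-preserving projection $\pi$ using that $p$ is odd, and the correlation identity showing the coefficient of $x^k$ in $\sum_i u_i(x)v_i(x^{-1})$ equals $w\cdot\sigma^k(w')$, with $\sigma$-invariance of $E_\sigma(C)$ upgrading plain self-orthogonality to full polynomial vanishing) is exactly how Yorgov and Huffman prove it. One small point of care: in the converse direction you invoke Theorem~\ref{directsum}, which the paper states for self-dual $C$; you should note that the decomposition $C=F_\sigma(C)\oplus E_\sigma(C)$ holds for any $\sigma$-invariant binary code with $p$ odd (via $v=(v+P(v))+P(v)$ where $P(v)=\sum_{j=0}^{p-1}\sigma^j(v)$), since at that stage self-duality is what is being proved, whereas the dimension formulas of Theorem~\ref{directsum} are only needed, and legitimately available, in the forward direction to get $\dim C_\pi=\frac{c+f}{2}$.
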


\begin{theorem}{\rm (\cite{Huff})}
\label{primroot} Let $2$ be a primitive root modulo $p$. Then the
binary code $C$ with an automorphism $\sigma$ is self-dual if and only if the
following two conditions hold:

 $(i)$ $C_\pi$ is a self-dual binary code of length $c + f$;

 $(ii)$ $C_\varphi$ is a self-dual code of length $c$ over the field ${\cal P}$ under the inner product
 $(u,v)=\sum\limits_{i=1}^{c}{u_iv_i^{2^{(p-1)/2}}}.$
\end{theorem}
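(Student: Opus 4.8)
The plan is to build directly on Theorem~\ref{thm2}, which already characterizes self-duality of $C$ by the two conditions: (i) $C_\pi$ is a binary self-dual code of length $c+f$, and (ii) $\sum_{i=1}^c u_i(x)v_i(x^{-1})=0$ in $\F_2[x]/(x^p-1)$ for all $u,v\in C_\varphi$. Condition (i) is word-for-word condition (i) of the present theorem, so nothing remains to be done there. The entire content of the proof is therefore to show that, when $2$ is a primitive root modulo $p$, condition (ii) of Theorem~\ref{thm2} is equivalent to the assertion that $C_\varphi$ is a self-dual code of length $c$ over the field $\mathcal{P}$ under the form $(u,v)=\sum_{i=1}^c u_i v_i^{2^{(p-1)/2}}$.

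First I would make the ring structure of $\mathcal{P}$ explicit. Since $2$ is a primitive root modulo $p$, the polynomial $m(x)=(x^p-1)/(x-1)$ is irreducible over $\F_2$, so the Chinese Remainder Theorem gives $\F_2[x]/(x^p-1)\cong \F_2\times\F_{2^{p-1}}$, under which the even-weight ideal $\mathcal{P}=(x-1)$ is carried isomorphically onto the factor $\F_{2^{p-1}}$. Write $\alpha$ for the image of $x$; it is a primitive $p$-th root of unity, and because the multiplicative order of $2$ modulo $p$ is exactly $p-1$, it generates $\F_{2^{p-1}}$ over $\F_2$.

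Next I would verify the two structural facts that turn condition (ii) into Hermitian self-duality. The first is that $C_\varphi$ is not merely $\F_2$-linear but $\mathcal{P}$-linear: the shift $\sigma$ acts on each $p$-cycle as multiplication by $x$, and $E_\sigma(C)$ is $\sigma$-invariant, so $C_\varphi$ is closed under multiplication by $x$; since $x$ generates $\mathcal{P}$ as a field over $\F_2$, closure under $x$ together with $\F_2$-linearity forces closure under all of $\mathcal{P}$. The second is the identification of the substitution $v(x)\mapsto v(x^{-1})$ with a field power: under $\mathcal{P}\cong\F_{2^{p-1}}$ it sends $\alpha$ to $\alpha^{-1}$, and since $2^{(p-1)/2}\equiv -1\pmod p$ we have $\alpha^{-1}=\alpha^{2^{(p-1)/2}}$, so the substitution is exactly the field automorphism $y\mapsto y^{2^{(p-1)/2}}$. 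This map is an involution of $\F_{2^{p-1}}$ fixing the subfield $\F_{2^{(p-1)/2}}$, so the form $(u,v)$ is a genuine nondegenerate Hermitian form, and the sum in condition (ii) equals $(u,v)$. Hence condition (ii) says precisely that $C_\varphi$ is self-orthogonal for this form.

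Finally I would pin down the dimension to upgrade self-orthogonality to self-duality. By Theorem~\ref{directsum}, $\dim_{\F_2}E_\sigma(C)=c(p-1)/2$, and since codewords of $E_\sigma(C)$ vanish on all fixed coordinates the map $\varphi$ is an $\F_2$-isomorphism onto $C_\varphi$; thus $\dim_{\F_2}C_\varphi=c(p-1)/2$, i.e.\ $\dim_{\mathcal{P}}C_\varphi=c/2$. A self-orthogonal code of $\mathcal{P}$-dimension $c/2$ and length $c$ under a nondegenerate form is automatically self-dual, which closes the equivalence. The main obstacle I anticipate is the bookkeeping of the third step --- arguing cleanly that $C_\varphi$ is $\mathcal{P}$-linear and that the combinatorial substitution $x\mapsto x^{-1}$ is the correct field conjugation $y\mapsto y^{2^{(p-1)/2}}$ --- since everything else reduces to the CRT decomposition and a dimension count already supplied by Theorem~\ref{directsum}.
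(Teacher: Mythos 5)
The paper gives no proof of this theorem at all---it is quoted from Huffman \cite{Huff}---so the only meaningful comparison is with the standard argument, and your proposal is correct and follows essentially that route: the CRT identification $\mathcal{P}\cong \F_{2^{p-1}}$ (valid exactly because $2$ is a primitive root mod $p$, making $(x^p-1)/(x-1)$ irreducible), $\mathcal{P}$-linearity of $C_\varphi$ from $\sigma$-invariance of $E_\sigma(C)$, the identification of the substitution $x\mapsto x^{-1}$ with the involutory field automorphism $y\mapsto y^{2^{(p-1)/2}}$ via $2^{(p-1)/2}\equiv -1 \pmod p$, and the count $\dim_{\mathcal{P}}C_\varphi=c/2$ from Theorem \ref{directsum} to promote Hermitian self-orthogonality to self-duality. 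Your one organizational difference---deriving the statement from Yorgov's Theorem \ref{thm2} rather than establishing the orthogonality relations on $E_\sigma(C)^*$ from scratch as Huffman does---is a legitimate shortcut, since Theorem \ref{thm2}(ii) already packages the duality condition whose translation into the Hermitian inner product is the whole content here.
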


To classify the codes, we need additional conditions for
equivalence. That's why we use the following theorem:

\begin{theorem} {\rm (\cite{Yorgov56})}\label{thm:eq}
The following transformations preserve the decomposition and send
the code $C$  to an equivalent one:

a) the substitution $x\to x^t$ in $C_{\varphi}$, where $t$ is an
integer, $1\le t\le p - 1$;

b) multiplication of the $j$th coordinate of $C_{\varphi}$ by
$x^{t_j}$ where $t_j$ is an integer, $0\le t_j\le p - 1$, $j =
1,2,\dots,c$;

c) permutation of the first $c$ cycles of $C$;

d) permutation of the last $f$ coordinates of $C$.
   \end{theorem}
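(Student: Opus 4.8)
The plan is to exhibit, for each of the four operations, an explicit coordinate permutation $\rho$ of $\{1,\dots,n\}$ that realizes it, and to show that $\rho$ normalizes the cyclic group $\langle\sigma\rangle$. Over $\F_2$ the only nonzero scalar is $1$, so equivalence of binary codes is the same as permutation equivalence; hence $\rho(C)$ is automatically equivalent to $C$, and the entire content of the theorem is that $\rho$ respects the decomposition $C=F_\sigma(C)\oplus E_\sigma(C)$ and induces on $(C_\pi,C_\varphi)$ precisely the stated maps.

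First I would record why normalizing $\langle\sigma\rangle$ is exactly what is needed. Suppose $\rho\sigma\rho^{-1}=\sigma^{t}$ with $\gcd(t,p)=1$. Since $\langle\sigma^{t}\rangle=\langle\sigma\rangle$, a vector is fixed by $\sigma$ if and only if its $\rho$-image is fixed by $\sigma$, and an orbit $\Omega_i$ of $\sigma$ is carried by $\rho$ onto another orbit. Consequently $\rho(C)$ is again $\sigma$-invariant with $F_\sigma(\rho(C))=\rho(F_\sigma(C))$ and $E_\sigma(\rho(C))=\rho(E_\sigma(C))$: the defining parity condition $\wt(v\vert\Omega_i)\equiv 0 \pmod 2$ for all $i$ is preserved because $\rho$ permutes the $\Omega_i$ among themselves and sends fixed coordinates (on which every vector of $E_\sigma$ vanishes) to fixed coordinates. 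Thus $\rho$ preserves the decomposition, and it only remains to identify its induced action on $C_\pi$ and $C_\varphi$.

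Next I would treat the four cases. For (c) let $\rho$ permute the blocks $\Omega_1,\dots,\Omega_c$ as whole cycles, and for (d) let $\rho$ permute the fixed coordinates $\Omega_{c+1},\dots,\Omega_{c+f}$; both plainly commute with $\sigma$, the first inducing the same permutation on the first $c$ coordinates of $C_\pi$ and on the coordinates of $C_\varphi$, the second permuting the last $f$ coordinates of $C_\pi$ while fixing $C_\varphi$. For (b) let $\rho=\sigma_1^{t_1}\cdots\sigma_c^{t_c}$, where $\sigma_j$ is the cyclic shift acting only on $\Omega_j$; each $\sigma_j$ commutes with $\sigma$, so $\rho$ does too. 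Identifying $v\vert\Omega_j$ with a polynomial in ${\cal P}$, the shift $\sigma_j$ corresponds to multiplication by $x$, so $\rho$ acts trivially on constant vectors (hence on $C_\pi$) and multiplies the $j$th coordinate of $C_\varphi$ by $x^{t_j}$, as claimed. Finally, for (a) let $\rho=\mu_t$ be the multiplier $j\mapsto tj\bmod p$ applied simultaneously inside every cycle. A direct computation gives $\mu_t\,\sigma\,\mu_t^{-1}=\sigma^{t}$, so $\mu_t$ normalizes $\langle\sigma\rangle$; it fixes constant vectors, so $C_\pi$ is unchanged, and under $\varphi$ one checks $\varphi(\mu_t v)_i(x)=v_i(x^{t})$, which is exactly the substitution $x\to x^{t}$.

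The only genuinely delicate step is case (a): one must verify that the algebraic substitution $x\to x^{t}$ really arises from a coordinate permutation, get the direction of the exponent right (that $\mu_t$ conjugates $\sigma$ to $\sigma^{t}$ rather than $\sigma^{t^{-1}}$, being careful that $(\rho v)_k=v_{\rho^{-1}(k)}$), and confirm that $\varphi$ intertwines $\mu_t$ with the substitution. Cases (b)--(d) are routine once one observes that the relevant permutations centralize $\sigma$.
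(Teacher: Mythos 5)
Your proof is correct: each of the four operations is realized by an explicit coordinate permutation normalizing $\langle\sigma\rangle$ (the centralizing shifts and block/fixed-point permutations for b)--d), the multiplier $\mu_t$ with $\mu_t\sigma\mu_t^{-1}=\sigma^t$ and $\varphi(\mu_t v)_i(x)=v_i(x^t)$ for a)), and the computations you flag as delicate do check out, including the exponent direction under the convention $(\rho v)_k=v_{\rho^{-1}(k)}$. The paper itself gives no proof, quoting the result from Yorgov's 1987 paper, and your argument is exactly the standard one underlying that citation, so there is nothing to compare beyond noting agreement.
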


\section{Codes with an Automorphism of Odd Prime Order}

\subsection{  Codes with an Automorphism of Order 3}

In this section we first classify self-dual codes with an automorphism of order 3.

Let $C$ be a self-dual $[48,24,10]$ code with an automorphism of
order 3. According to Corollary \ref{cor:types1} this automorphism
is of type 3-(12,12), 3-(14,6) or 3-(16,0). In this section we
prove that only the type 3-(16,0) is possible. Moreover, we
classify all binary self-dual codes with the given parameters
which are invariant under a fixed point free permutation of order
3.

\begin{proposition}\label{prop:p3c12}
Self-dual $[48,24,10]$ codes with an automorphism of type
$3$-$(12,12)$ do not exist.
\end{proposition}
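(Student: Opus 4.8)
The plan is to show that the self-dual code $C_\pi$ of length $c+f=24$ forced by Theorem~\ref{thm2}(i) cannot coexist with the minimum-distance requirement $d=10$, by tracking how codewords of $C_\pi$ and $C_\varphi$ lift to codewords of $C$ and bounding their weights from above. For the type $3$-$(12,12)$ we have $c=f=12$, so $C_\pi$ is a binary self-dual $[24,12]$ code and $E_\sigma(C)^*$ has dimension $c(p-1)/2 = 12$. The key observation is that $F_\sigma(C)$ contributes codewords of $C$ that are constant on each of the $12$ three-cycles, so a weight-$w$ codeword in $C_\pi$ supported on $a$ of the cycle-coordinates and $b$ of the fixed coordinates lifts to a codeword of $C$ of weight $3a+b$. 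I would examine the minimal nonzero weights available in $C_\pi$ and check that they force low-weight codewords in $C$.

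**First I would** pin down $C_\pi$ precisely. Since $B_i \equiv A_i \pmod 3$ by Proposition~\ref{weight_distrib} and $A_i=0$ for $0<i<10$, the self-dual $[24,12]$ code $C_\pi$ must have no codeword whose lift has weight below $10$; in particular any weight-$2$ codeword of $C_\pi$ supported entirely on fixed points (lifting to weight $2$) or supported on one cycle-coordinate and... is forbidden, which already constrains $C_\pi$ to have fairly large minimum distance on the relevant coordinates. Because a weight-$w$ vector of $C_\pi$ supported only on the $f=12$ fixed coordinates lifts with the same weight $w$, the restriction of $C_\pi$ to its last $12$ coordinates can have no nonzero codeword of weight below $10$; similarly a vector supported only on the first $12$ (cycle) coordinates lifts to triple weight, so weight $\geq 4$ there suffices to clear $10$. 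The arithmetic here severely limits which self-dual $[24,12]$ codes can occur as $C_\pi$: essentially $C_\pi$ must behave like a code of minimum distance close to its theoretical maximum on both coordinate blocks simultaneously.

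**The next step** is to combine this with the $\pmod 3$ congruences on the full weight enumerator. The standard technique (as in the cited works of Yorgov and Huffman) is to enumerate the possible self-dual codes $C_\pi$ of length $24$ up to the equivalences of Theorem~\ref{thm:eq}, compute for each the induced constraints on $C_\varphi$, and verify that in every case either $C$ acquires a codeword of weight $8$ or less, or the congruence $A_{10}\equiv B_{10}\pmod 3$ (with $A_{10}\in\{704,768\}$) fails. Since $704 = 3\cdot 234 + 2$ and $768 = 3\cdot 256$, the two weight enumerators give $A_{10}\equiv 2$ and $A_{10}\equiv 0 \pmod 3$ respectively, so the count $B_{10}$ of minimum-weight codewords in $F_\sigma(C)$ is tightly pinned; I would argue that the constant-on-cycles structure makes $B_{10}$ too small to meet these residues while keeping $d=10$.

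**The main obstacle** I expect is not any single inequality but the bookkeeping over the self-dual $[24,12]$ codes: there are nine of them up to equivalence, and one must rule out each as a candidate $C_\pi$ by showing it forces a low-weight lift. I would try to avoid a full case analysis by isolating a uniform weight bound---for instance, showing that with $f=12$ fixed points the fixed subcode $F_\sigma(C)$ always contains a codeword of weight at most $8$, coming from a minimum-weight vector of $C_\pi$ supported mainly on the fixed coordinates, since a self-dual $[24,12]$ code cannot avoid short codewords concentrated on a fixed $12$-coordinate block. If such a uniform bound can be extracted it collapses all cases at once; if not, the proof will have to descend into the classification of length-$24$ self-dual codes, which is the laborious part.
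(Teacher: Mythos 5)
Your plan has a genuine gap: it tries to derive the contradiction entirely from the fixed subcode, i.e.\ from $C_\pi$ and the lifts $\pi^{-1}(v)$ of weight $3a+b$, but for the type $3$-$(12,12)$ that side of the decomposition is not obstructed at all. Take $C_\pi$ to be the extended Golay $[24,12,8]$ code with the $12$ fixed coordinates placed on a dodecad $D$. For any octad $O$ one has $\wt(O+D)=20-2|O\cap D|$, and since the Golay weights are $0,8,12,16,24$ this forces $|O\cap D|\in\{2,4,6\}$; in particular no weight-$8$ codeword is supported inside the fixed block, so every nonzero $v\in C_\pi$ with $a$ cycle coordinates and $b$ fixed coordinates lifts to weight $3a+b=\wt(v)+2a\ge 10$. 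Hence your hoped-for uniform bound (that $F_\sigma(C)$ always contains a codeword of weight at most $8$) is false, and the case analysis over the self-dual $[24,12]$ codes stalls precisely at the Golay code. The congruence of Proposition \ref{weight_distrib} does not rescue the argument either: by Lemma \ref{lem:order3} such a code has enumerator $W_{48,2}$, where $A_{10}=768\equiv 0\pmod 3$, so $B_{10}=0$ is perfectly consistent and no contradiction arises from the residues.

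The actual obstruction lives on the other summand of Theorem \ref{directsum}, which your proposal mentions only in passing. Since $2$ is a primitive root modulo $3$, Theorem \ref{primroot} makes $C_\varphi$ a Hermitian self-dual $[12,6]$ code over $\mathcal{P}\cong GF(4)$, and by the Conway--Pless--Sloane classification \cite{CPS} the largest minimum distance of such a code is $4$. Every nonzero element of $\mathcal{P}$ is an even-weight polynomial in $\F_2[x]/(x^3-1)$, hence has binary weight exactly $2$, so a minimum-weight vector of $C_\varphi$ lifts to a codeword of $E_\sigma(C)$ of weight at most $8$, contradicting $d=10$. This one-line argument is the paper's entire proof; without invoking the bound on Hermitian self-dual $[12,6]$ codes (or some equivalent constraint on $C_\varphi$), your outline cannot be completed.
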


\begin{proof}
 In this case the code $C_\varphi$ is a self-dual $[12,6]$ code over the field
     $\mathcal{P}=\{ 0,e(x)=x+x^2,xe(x),x^2 e(x)\} $ under the inner product
     $(u,v)=u_1v_1^2 + u_2 v_2^2 +\cdots +u_{12}v_{12}^2$.
    The highest possible minimum distance of a quaternary Hermitian self-dual $[12,6]$ code is 4 (see \cite{CPS}), hence the minimum distance of
    $E_\sigma (C)$ can be at most 8 - a conflict with the minimum distance
    of C.\end{proof}

   \begin{proposition}\label{prop:p3c14}
Self-dual $[48,24,10]$ codes with an automorphism of type
$3$-$(14,6)$ do not exist.
\end{proposition}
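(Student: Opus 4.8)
The plan is to mimic the strategy used for the type $3$-$(12,12)$ case in Proposition \ref{prop:p3c12}, adapting it to the presence of $f=6$ fixed points. For an automorphism of type $3$-$(14,6)$, the prime is $p=3$, so $2$ is a primitive root modulo $3$ and Theorem \ref{primroot} applies. Thus $C_\varphi$ is a Hermitian self-dual code of length $c=14$ over the field $\mathcal{P}=\{0,e(x),xe(x),x^2e(x)\}\cong \F_4$, while $C_\pi$ is a binary self-dual code of length $c+f=20$. First I would translate the known minimum distance $d=10$ of $C$ into a constraint on the minimum weight of $E_\sigma(C)$. Since every nonzero codeword of $E_\sigma(C)$ has even weight on each of the $14$ cycles, a nonzero word of $C_\varphi$ of Hamming weight $w$ over $\mathcal{P}$ lifts to a word of $E_\sigma(C)$ whose binary weight is exactly $2w$ (each nonzero entry $x^i e(x)$ contributes weight $2$). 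Hence $d(E_\sigma(C)) = 2\, d(C_\varphi)$, and to respect $d(C)=10$ we need $d(C_\varphi)\ge 5$.

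The key step is then to quote the optimality bounds for quaternary Hermitian self-dual codes. The length $14$ is even, so such codes exist; I would consult the table of highest attainable minimum distances for Hermitian self-dual $[14,7]$ codes over $\F_4$ (as in \cite{CPS}, the same source cited in the previous proposition). The relevant fact is that the best minimum distance for a Hermitian self-dual $[14,7]_4$ code is $4$, which forces $d(C_\varphi)\le 4$ and therefore $d(E_\sigma(C))\le 8$. This contradicts $d(C)=10$, since $E_\sigma(C)\subseteq C$ and any nonzero codeword of $E_\sigma(C)$ is a nonzero codeword of $C$, so it must have weight at least $10$.

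The main obstacle, and the point requiring the most care, is ruling out the possibility that $E_\sigma(C)$ is trivial or that its low-weight words somehow escape the bound. One must verify that $E_\sigma(C)$ contains nonzero words at all: by Theorem \ref{directsum}, $\dim(E_\sigma(C))=\frac{c(p-1)}{2}=14>0$, so $C_\varphi$ is a genuine nonzero code and the minimum-distance argument has content. A secondary subtlety is confirming the exact weight-doubling correspondence between $C_\varphi$ and $E_\sigma(C)^*$; the nonzero elements of $\mathcal{P}$ are precisely $e(x)$, $xe(x)$, $x^2e(x)$, each of Hamming weight $2$ in $\F_2^3$, so the factor of $2$ is exact and not merely a lower bound. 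Assembling these pieces yields the contradiction and hence the nonexistence claimed in the proposition.
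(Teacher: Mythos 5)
Your overall framework is set up correctly (Theorem \ref{primroot} applies, $C_\varphi$ is a Hermitian self-dual $[14,7]$ code over $\mathcal{P}\cong\F_4$, and the weight-doubling $d(E_\sigma(C))=2\,d(C_\varphi)$ is exact), but the key quoted fact is false, and the whole argument collapses on it. The highest minimum distance of a quaternary Hermitian self-dual $[14,7]$ code is $6$, not $4$: length $14$ meets the extremal bound $d\le 2\lfloor n/6\rfloor+2=6$, and this bound is attained, e.g.\ by the extended quadratic residue code of length $14$ over $\F_4$, which appears in the very classification \cite{CPS} you cite. The situation at length $12$ (where the extremal value $6$ is \emph{not} attained and the maximum is $4$, which is what makes the proof of Proposition \ref{prop:p3c12} work) does not extrapolate to length $14$. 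Since Hermitian self-dual codes over $\F_4$ have all weights even, your requirement $d(C_\varphi)\ge 5$ just forces $d(C_\varphi)=6$, which is realizable; then $d(E_\sigma(C))=12\ge 10$ and no contradiction arises. So the $E_\sigma$-side cannot rule out type $3$-$(14,6)$ at all.

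The paper's proof instead works on the fixed-point side, which your argument never exploits. There $C_\pi$ is a binary self-dual $[20,10]$ code, and the minimum distance $10$ of $C$ forces $d(C_\pi)\ge 4$ (a weight-$2$ word of $C_\pi$ lifts under $\pi^{-1}$ to a word of $F_\sigma(C)$ of weight at most $3\cdot 2=6$). By Pless's classification there are exactly seven self-dual $[20,10,4]$ codes, and one checks that each of them, in every splitting of the $20$ coordinates into the $14$ cycle positions and $6$ fixed positions, contains a weight-$4$ codeword $v=(v_1,v_2)$ with $\wt(v_1)=\wt(v_2)=2$; then $\pi^{-1}(v)\in F_\sigma(C)$ has weight $3\cdot 2+2=8<10$, the desired contradiction. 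Any correct repair of your proof would have to bring in information about $F_\sigma(C)$ (or some joint constraint coupling $C_\pi$ and $C_\varphi$); the subcode $E_\sigma(C)$ alone is consistent with all the hypotheses.
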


\begin{proof} Assume that $\sigma =(1,2,3)(4,5,6)\ldots
(40,41,42)$ is an automorphism of the self-dual $[48,24,10]$ code
$C$. Then $C_\pi$ is a binary self-dual [20,10,4] code. There are
exactly 7 inequivalent self-dual [20,10,4] codes, namely $J_{20}$,
$A_8\oplus B_{12}$, $K_{20}$, $L_{20}$, $S_{20}$, $R_{20}$ and
$M_{20}$ (see \cite{PlessSO}). If $C_\pi$ is equivalent to any of
these codes there is a vector $v=(v_1,v_2)$ in $C_\pi$ with
$v_1\in \F_{2}^{14}$, $v_2\in \F_{2}^6$ and $wt(v_1)=wt(v_2)=2$.
Thus the vector $\pi ^{-1}(v)\in F_\sigma (C)$ has weight 8 which
contradicts the minimum distance.\end{proof}

Let now $C$ be a singly-even  $[48,24,10]$ code, possessing an
automorphism with 16 cycles of length 3 and no fixed point in its
decomposition into independent cycles.


According to Theorem \ref{thm2} the subcode $C_{\pi}$ is a
singly-even binary self-dual $[16,8,\geq 4]$ code. There is one
such code, denoted by $2d_8$ in \cite{PlessSO}, with a generator
matrix
$$G_B=\left(\begin{array}{c}
1000001110010010\\
0100001110011101\\
0010001110001111\\
0001001100011010\\
0000101010011010\\
0000010110011010\\
0000000001010110\\
0000000000110011
\end{array}\right).$$
It is more convenient for us to denote this code by $B$. The
automorphism group of $B$ is generated by the permutations
$(1,12,4,2,13,15,9,3)(5,16,8,11)(6,14)(7,10)$ and
$(1,8,7)(5,6,13)(10,12)(14,15)$. Its order is 76728.


According to Theorem \ref{thm2} the subcode $C_{\varphi}$ is a
quaternary Hermitian self-dual $[16, 8,\geq 5]$ code. There are
exactly 4 inequivalent such codes $1_{16}$, $1_6+2f_5$, $4f_4$,
and $2f_8$ \cite{CPS}. Their generator matrices in standard form
are $G_i=(I|X_i)$, $i=1,\dots,4$, where
\[
X_1=\left(\begin{array}{cccccccc}
0&0&\omega&1&1&1&0&\omega^2\\
\omega^2&\omega^2&0&\omega^2&\omega&\omega^2&1&1\\
\omega&0&1&\omega&0&1&\omega&0\\
1&\omega^2&\omega^2&\omega&\omega^2&\omega^2&\omega^2&0\\
0&1&1&1&\omega^2&\omega^2&\omega&1\\
\omega^2&1&\omega^2&0&\omega&0&\omega&0\\
\omega&0&\omega&\omega^2&1&\omega&1&\omega^2\\
0&1&\omega^2&\omega^2&\omega^2&\omega^2&\omega^2&\omega^2\\
\end{array}\right), \
X_2=\left(\begin{array}{cccccccc}
\omega^2&1&\omega^2&0&\omega^2&0&\omega&0\\
1&\omega^2&\omega&\omega^2&1&1&0&1\\
\omega&1&1&\omega^2&1&\omega^2&\omega^2&0\\
0&\omega&0&\omega^2&0&1&\omega^2&\omega^2\\
1&\omega&0&\omega&1&1&\omega^2&1\\
0&\omega^2&\omega&1&\omega^2&\omega&\omega&\omega^2\\
\omega&0&0&1&\omega^2&\omega^2&\omega^2&0\\
0&\omega&\omega^2&1&\omega^2&\omega^2&\omega^2&\omega^2\\
\end{array}\right),
\]
\[
X_3=\left(\begin{array}{cccccccc}
\omega&0&\omega^2&0&0&\omega^2&1&\omega^2\\
1&\omega^2&\omega&\omega^2&1&1&0&1\\
\omega^2&\omega^2&\omega^2&\omega&0&\omega&\omega&1\\
1&\omega^2&0&\omega^2&\omega^2&\omega&0&0\\
1&\omega&0&\omega&1&1&\omega^2&1\\
0&1&\omega^2&\omega&\omega&\omega^2&\omega^2&\omega\\
1&1&1&\omega&0&0&0&\omega^2\\
1&\omega&0&\omega^2&\omega^2&\omega^2&\omega^2&\omega^2\\
\end{array}\right), \
X_4=\left(\begin{array}{cccccccc}
1&0&\omega^2&\omega^2&\omega&1&\omega^2&1\\
1&\omega^2&\omega&\omega^2&1&1&0&1\\
\omega&1&1&\omega^2&1&\omega^2&\omega^2&0\\
\omega&\omega^2&0&0&1&0&\omega&\omega\\
1&\omega&0&\omega&1&1&\omega^2&1\\
1&0&\omega&\omega^2&0&1&1&0\\
\omega&0&0&1&\omega^2&\omega^2&\omega^2&0\\
0&\omega^2&1&\omega&1&1&1&1\\
\end{array}\right).\]

Denote by $C_i^\tau$ the self-dual $[48,24,10]$ code with a
generator matrix $$\gen\ C_i^\tau=\left(
  \begin{array}{c}
\pi^{-1}(\tau B)\\
\varphi^{-1}(X_{i})\\
  \end{array}
  \right),$$
where $\tau$ is a permutation from the symmetric group $S_{16}$,
and  $1\le i\le 4$. We use the following.

\begin{lemma}{\rm(\cite{Yorus})} If $\tau_1$  and  $\tau_2$ are in one and the same right coset of $\Aut(B)$ in $S_{16}$, then
$C_i^{\tau_{1}}$  and  $C_i^{\tau_{2}}$ are equivalent.
\end{lemma}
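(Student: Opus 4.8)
The plan is to realize the equivalence by an explicit coordinate permutation of $\F_2^{48}$ manufactured from the automorphism of $B$ that relates $\tau_1$ and $\tau_2$. Since $\tau_1$ and $\tau_2$ lie in the same right coset of $\Aut(B)$, I may write $\tau_1=\alpha\tau_2$ with $\alpha\in\Aut(B)$. Let $\hat\alpha\in S_{48}$ be the permutation that carries the cycle $\Omega_j$ rigidly onto $\Omega_{\alpha(j)}$ for $j=1,\dots,16$, preserving the cyclic order inside each $3$-cycle. The first step is to verify that $\hat\alpha$ commutes with $\sigma$; this is immediate, because $\hat\alpha$ only relabels the blocks $\Omega_j$ without disturbing the internal $\sigma$-action. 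Consequently $\hat\alpha$ preserves the decomposition $C=F_\sigma(C)\oplus E_\sigma(C)$ of Theorem \ref{directsum}, sends $\sigma$-invariant codes to $\sigma$-invariant codes, and is exactly a transformation of type c) in Theorem \ref{thm:eq}; in particular it is an equivalence, and on the level of the two ingredients it permutes the $16$ cycle-coordinates of both $C_\pi$ and $C_\varphi$ by $\alpha$.

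Next I would track the two ingredients of $C_i^{\tau_2}$ under $\hat\alpha$. On the fixed subcode, $\pi^{-1}(\tau_2B)$ is mapped to $\pi^{-1}\bigl(\alpha(\tau_2B)\bigr)$; since the coordinate action is a homomorphism and $\alpha\in\Aut(B)$, we have $\alpha(\tau_2B)=(\alpha\tau_2)B=\tau_1B$. Thus the projected code of $\hat\alpha(C_i^{\tau_2})$ is precisely $\tau_1B$, matching the top block of $C_i^{\tau_1}$. On the complementary part, $\varphi^{-1}(X_i)$ is mapped to $\varphi^{-1}(\alpha X_i)$, where $\alpha X_i$ is the quaternary Hermitian self-dual code obtained from $X_i$ by permuting its $16$ coordinates by $\alpha$.

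It then remains to reconcile the two $C_\varphi$-components: the code $\hat\alpha(C_i^{\tau_2})$ carries $\alpha X_i$ whereas $C_i^{\tau_1}$ carries $X_i$. Here I would use the dictionary between $\mathcal P$ and $\F_4$ recorded in the excerpt: multiplication by $x$ is multiplication by a primitive element of $\mathcal P\cong\F_4$ and $x\mapsto x^2$ is conjugation, so that transformations a) and b) of Theorem \ref{thm:eq} realize exactly the diagonal (monomial) and Frobenius equivalences of Hermitian self-dual codes over $\mathcal P$, while leaving $C_\pi$ untouched. Since $\alpha X_i$ and $X_i$ are equivalent as Hermitian self-dual $[16,8]$ codes, the goal is to select a compatible combination of transformations a)--c) that returns $C_\varphi$ to $X_i$ and simultaneously leaves the already-matched projected code $\tau_1B$ in place, whence $C_i^{\tau_1}\cong C_i^{\tau_2}$.

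The step I expect to be the main obstacle is precisely this reconciliation. Transformations a) and b) act inside the cycles and fix $C_\pi$, but the difference between $\alpha X_i$ and $X_i$ is a genuine coordinate permutation, whose naive correction is a type-c) transformation that would move $C_\pi$ off $\tau_1B$ again. The delicate point is therefore the bookkeeping that ties the cycle-permutation on $C_\varphi$ to the coset relation $\tau_1=\alpha\tau_2$, ensuring that the permutation needed to restore $C_\varphi$ can be absorbed into $\Aut(\tau_1B)=\tau_1\Aut(B)\tau_1^{-1}$ and hence does not perturb $C_\pi$. Once this compatibility is in hand, the remaining verifications (that $\hat\alpha$ is $\sigma$-normalizing and that it carries each ingredient as stated) are routine.
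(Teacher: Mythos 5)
Your first two steps are sound and are part of the standard toolkit: $\hat\alpha$ does commute with $\sigma$, it is a type c) transformation in the sense of Theorem \ref{thm:eq}, and it acts by $\alpha$ simultaneously on the cycle coordinates of $C_\pi$ and of $C_\varphi$. But the step you yourself flag as the ``main obstacle'' is a genuine gap, not deferred bookkeeping, and as you have set things up it cannot be closed. After applying $\hat\alpha$ you must compare the pair $(\tau_1 B,\,\alpha X_i)$ with $(\tau_1 B,\,X_i)$. The transformations a) and b) act monomially inside the cycles and cannot produce the coordinate permutation $\alpha$, while undoing $\alpha$ by a type c) transformation requires an element of $\Aut(\tau_1 B)=\tau_1\Aut(B)\tau_1^{-1}$ whose product with $\alpha^{-1}$ lies in the permutation part of the monomial automorphism group of $X_i$ --- and nothing in the hypothesis ties a generic $\alpha\in\Aut(B)$ to that set. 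If this reconciliation always succeeded, codes assembled from the same pair $(B,X_i)$ under different alignments would be equivalent far more often than they are; the existence of $264$ classes rests precisely on such pairs generally being inequivalent. So with your reading of the coset relation the statement resists proof because, read that way, it is not true in general.

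The paper itself gives no proof (the lemma is quoted from Yorgov), but the intended argument is a one-liner, and your difficulty comes from putting the action of $\alpha$ on the wrong side. The coset condition is exactly the one for which $\Aut(B)$ stabilizes the code $\tau B$ itself: under the convention in force (permutations composed into the coordinates, which is what makes $\Aut(B)\tau$, rather than $\tau\Aut(B)$, the fiber of the map $\tau\mapsto\tau B$), writing $\tau_1=\alpha\tau_2$ with $\alpha\in\Aut(B)$ gives $\tau_1 B=\tau_2(\alpha B)=\tau_2 B$ as codes of length $16$. Consequently $\pi^{-1}(\tau_1 B)=\pi^{-1}(\tau_2 B)$, and since the block $\varphi^{-1}(X_i)$ does not involve $\tau$ at all, the two generator matrices defining $C_i^{\tau_1}$ and $C_i^{\tau_2}$ generate the same code: the codes are equal, a fortiori equivalent. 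No permutation of the $48$ coordinates, no appeal to Theorem \ref{thm:eq}, and no analysis of $C_\varphi$ is needed; this is exactly what licenses enumerating only a right transversal of $\Aut(B)$ in $S_{16}$. Your $\hat\alpha$-machinery is the right tool for a different reduction --- absorbing the stabilizer of $X_i$, which leads to double-coset representatives --- but not for this lemma.
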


In order to classify all codes we have considered all
representatives of the right transversal of  $S_{16}$ with respect
to $\Aut(B)$. The obtained inequivalent codes and the orders of
their automorphism groups are listed in Tables \ref{Table:1}- \ref{Table:4} at the end.

\begin{proposition}\label{prop:3-16-0} There are exactly $264$ inequivalent binary $[48, 24, 10]$ self-dual codes with an automorphism of type
$3$-$(16,0)$.
\end{proposition}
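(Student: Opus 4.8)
The plan is to reduce the classification to a finite computation and then verify that the computation yields exactly $264$ codes. By the construction method developed in Section~2, every self-dual $[48,24,10]$ code $C$ with an automorphism $\sigma$ of type $3$-$(16,0)$ decomposes as $C=F_\sigma(C)\oplus E_\sigma(C)$, and by Theorem~\ref{thm2} (equivalently Theorem~\ref{primroot}, since $2$ is a primitive root modulo $3$) it is completely determined by the pair $(C_\pi,C_\varphi)$, where $C_\pi$ must be a binary self-dual $[16,8,\ge 4]$ code and $C_\varphi$ a quaternary Hermitian self-dual $[16,8,\ge 5]$ code. The minimum-distance bounds $\ge 4$ and $\ge 5$ come precisely from the requirement $d(C)=10$: a lighter codeword in either constituent would lift (via $\pi^{-1}$ or $\varphi^{-1}$) to a codeword of $C$ of weight below $10$. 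So the first step is to record that $C_\pi$ is forced to be the unique code $B=2d_8$ and $C_\varphi$ is forced to be one of the four codes $1_{16}$, $1_6+2f_5$, $4f_4$, $2f_8$ with generator matrices $X_1,\dots,X_4$.

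Next I would fix $C_\varphi$ to be each $X_i$ in turn and let the binary part vary. Because $\pi$ identifies the cycle structure, any self-dual code arising this way has a generator of the stated block form $\gen\,C_i^\tau$, where $\tau\in S_{16}$ permutes the sixteen cycles before applying $\pi^{-1}$ to $B$. This is where the equivalence theorems do the essential work: by Theorem~\ref{thm:eq}, transformations (a)--(d) send $C$ to an equivalent code, and by the Lemma of Yorgov, two permutations $\tau_1,\tau_2$ lying in the same right coset of $\Aut(B)$ in $S_{16}$ produce equivalent codes $C_i^{\tau_1}\cong C_i^{\tau_2}$. Hence it suffices to run $\tau$ over a single right transversal of $\Aut(B)$ in $S_{16}$ rather than over all of $S_{16}$. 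Since $|\Aut(B)|=76728$, the number of cosets is $16!/76728$, which is large but finite; the transformations from Theorem~\ref{thm:eq}(a),(b) further collapse the $\varphi$-side, so in practice one also factors out those symmetries to keep the search tractable.

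The computational core, then, is to enumerate the resulting codes $C_i^\tau$, discard those of minimum distance less than $10$, and sort the survivors into equivalence classes. For the latter I would compute a complete equivalence invariant for each candidate---most naturally the weight enumerator together with the order of $\Aut(C_i^\tau)$, refined by a canonical-form test for self-dual codes when invariants coincide. Every surviving code must have weight enumerator $W_{48,2}$ (not $W_{48,1}$), consistent with the earlier remark that codes with $W_{48,1}$ admit no automorphism of odd prime order. Carrying out this enumeration across all four choices of $X_i$ and merging duplicates across the four families yields the tally of inequivalent codes, which the paper reports as $264$ and tabulates with their automorphism-group orders in Tables~\ref{Table:1}--\ref{Table:4}.

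The main obstacle is the sheer size of the search space: a naive loop over a transversal of cosets of $\Aut(B)$ is enormous, so the real content is organizing the computation so that the symmetries of Theorem~\ref{thm:eq} are exploited aggressively---fixing coordinates of $C_\varphi$ via column multiplications by $x^{t_j}$ and the substitution $x\to x^t$, and using partial canonical forms of $B$ to prune branches---so that only a manageable number of genuinely distinct $(B,X_i,\tau)$ triples are ever generated and tested. A secondary difficulty is ensuring the equivalence test is both sound and complete, i.e.\ that no two listed codes are secretly equivalent and that no equivalence class is missed; this is handled by backing up the invariant-based sort with an exact isomorphism check on any pair sharing all invariants.
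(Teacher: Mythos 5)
Your proposal reproduces the paper's proof in all essentials: decompose $C=F_\sigma(C)\oplus E_\sigma(C)$, identify $C_\pi$ with the code $B$ and $C_\varphi$ with one of the four quaternary Hermitian self-dual $[16,8,\ge 5]$ codes of \cite{CPS}, form the codes $C_i^\tau$ with generator matrix $\bigl(\pi^{-1}(\tau B),\ \varphi^{-1}(X_i)\bigr)$, invoke Yorgov's lemma that permutations in the same right coset of $\Aut(B)$ in $S_{16}$ yield equivalent codes so that $\tau$ need only run over a right transversal, and then sift the survivors into equivalence classes --- exactly the computation behind Tables \ref{Table:1}--\ref{Table:4} and the count $264$.

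There is, however, one genuine gap in your reduction step. You claim the minimum-distance requirement alone forces $C_\pi$ to be ``the unique code $B=2d_8$.'' That is false as stated: there are three inequivalent binary self-dual $[16,8,4]$ codes, namely the singly-even $2d_8$ and the two doubly-even codes $2e_8$ and $d_{16}^{+}$. Your lifting argument (a word of weight $w$ in $C_\pi$ lifts to weight $3w$) only gives $d(C_\pi)\ge 4$ and does not eliminate the doubly-even candidates, so the completeness of your enumeration is unproven as written. The missing observation --- implicit in the paper's assertion that $C_\pi$ is \emph{singly-even} --- is the following. Every Hermitian self-dual code over $\F_4$ is even (for $u$ self-orthogonal, $0=\sum u_iu_i^2=\sum u_i^3\equiv \wt(u)\pmod 2$), so each vector of $E_\sigma(C)$ has weight $2k$ with $k$ even, i.e.\ weight $\equiv 0\pmod 4$. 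Moreover, for even $w$ one has $3w\equiv w\pmod 4$, so a doubly-even $C_\pi$ would make $F_\sigma(C)$ doubly-even; since self-orthogonality makes all pairwise overlaps even, all of $C=F_\sigma(C)\oplus E_\sigma(C)$ would then be doubly-even, contradicting the existence of weight-$10$ words. Hence $C_\pi$ is singly-even, and $2d_8$ is indeed the unique singly-even self-dual $[16,8,\ge 4]$ code. With this step supplied your argument coincides with the paper's. (A smaller point: your filter ``discard minimum distance $<10$'' selects exactly the $[48,24,10]$ codes only because, with $C_\pi=B$ singly-even, $C$ is singly-even and therefore satisfies $d\le 10$ by the bound of \cite{C-S}; it is worth saying so.)
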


\begin{corollary}
There are exactly $264$ inequivalent binary cubic self-dual $[48, 24, 10]$ codes.
\end{corollary}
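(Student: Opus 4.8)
The plan is to establish Proposition~\ref{prop:3-16-0} by an exhaustive yet systematic construction using the decomposition machinery developed in Section~2, specializing to the type $3$-$(16,0)$. By Theorem~\ref{thm2} (equivalently Theorem~\ref{primroot}, since $2$ is a primitive root modulo $3$), any such code $C$ is completely determined by the pair $(C_\pi, C_\varphi)$, where $C_\pi$ is a binary self-dual $[16,8,\ge 4]$ code and $C_\varphi$ is a quaternary Hermitian self-dual $[16,8,\ge 5]$ code over $\mathcal{P}\cong\F_4$. The minimum-distance constraints are forced: a codeword in $F_\sigma(C)$ of weight $w$ in $C_\pi$ lifts to weight $3w$, so $C_\pi$ must have minimum distance $\ge\lceil 10/3\rceil = 4$; likewise a codeword in $E_\sigma(C)^*$ of Hamming weight $w$ over $\mathcal{P}$ lifts to weight $2w$, forcing $C_\varphi$ to have minimum distance $\ge 5$. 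I would invoke the classification results already cited in the excerpt: there is a \emph{unique} self-dual $[16,8,\ge 4]$ code $B=2d_8$ (from \cite{PlessSO}), and exactly four inequivalent Hermitian self-dual $[16,8,\ge 5]$ codes $1_{16}$, $1_6+2f_5$, $4f_4$, $2f_8$ (from \cite{CPS}).

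First I would reduce the choice of $C_\pi$ to the single code $B$ up to the permutations allowed by Theorem~\ref{thm:eq}(c),(d); since $f=0$ there are no fixed coordinates to permute, so only the permutation $\tau\in S_{16}$ of the cycles matters. For each of the four quaternary codes $X_i$ fixed in standard form $G_i=(I\,|\,X_i)$, the code $C_i^\tau$ is built as in the displayed generator matrix by lifting $\tau B$ via $\pi^{-1}$ and $X_i$ via $\varphi^{-1}$. The key combinatorial reduction is the Lemma (from \cite{Yorus}): if $\tau_1,\tau_2$ lie in the same right coset of $\Aut(B)$ in $S_{16}$, then $C_i^{\tau_1}\cong C_i^{\tau_2}$. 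Since $|\Aut(B)|=76728$, the number of right cosets is $|S_{16}|/76728$, which is a finite (though large) index; enumerating one representative $\tau$ per coset yields a finite list of candidate codes for each $i$.

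The core of the proof is then computational: for each coset representative $\tau$ and each $i\in\{1,2,3,4\}$, construct $C_i^\tau$, verify that its minimum distance is exactly $10$ (discarding those with $d<10$, which can occur when the lifted generators produce low-weight words), and sort the survivors into equivalence classes. I would carry out the equivalence testing by computing a canonical form (e.g.\ via the automorphism-group / code-equivalence algorithms in a package such as Magma or the authors' own software), using the order of $\Aut(C_i^\tau)$ as a first invariant to separate classes quickly and to cross-check completeness. The outcome, recorded in Tables~\ref{Table:1}--\ref{Table:4}, is that after removing duplicates across all four families one obtains exactly $264$ inequivalent codes. I would also note that the four quaternary codes need not be treated with further equivalence reductions among themselves beyond Theorem~\ref{thm:eq}(a),(b), which permit the substitution $x\to x^{-1}$ and coordinate-wise multiplication by powers of $x$; these are precisely the operations realizing Hermitian equivalence over $\mathcal{P}$, so using representatives in standard form already accounts for them.

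The main obstacle is the sheer size of the search space: the right transversal of $\Aut(B)$ in $S_{16}$ has index on the order of $16!/76728 \approx 2.7\times 10^8$, so a naive enumeration of all $\tau$ is infeasible and the correctness of the count hinges on an efficient, provably exhaustive traversal of the cosets together with reliable equivalence testing to avoid both double-counting and omissions. I would mitigate this by exploiting the structure further—fixing the first few images of $\tau$ using the transitivity properties of $\Aut(B)$ to prune the tree, and by using invariants (minimum weight, weight enumerator $W_{48,1}$ versus $W_{48,2}$, automorphism-group order) as filters before invoking a full canonical-form computation. The final assertion that the total is exactly $264$ is then the consolidated result of this exhaustive computer search, consistent with the fact (from \cite{Bou97}) that such codes all have weight enumerator $W_{48,2}$.
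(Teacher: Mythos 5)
Your proposal follows the paper's own route step for step --- the decomposition $C=F_\sigma(C)\oplus E_\sigma(C)$, the identification of $C_\pi$ with a single binary code $B$ and of $C_\varphi$ with one of the four Conway--Pless--Sloane Hermitian codes from \cite{CPS}, the construction of $C_i^\tau$ over a right transversal of $\Aut(B)$ in $S_{16}$, and a computer classification --- so there is no methodological divergence to compare. There is, however, one genuine gap, and it sits at the only purely deductive link in your exhaustiveness argument. You assert that there is a \emph{unique} self-dual $[16,8,\ge 4]$ binary code, namely $B=2d_8$. That is false: Pless's classification \cite{PlessSO} contains three self-dual $[16,8,4]$ codes, the doubly-even codes $2e_8$ and $d_{16}^{+}$ in addition to the singly-even $2d_8$. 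What the paper actually uses is that $C_\pi$ is a \emph{singly-even} self-dual $[16,8,\ge 4]$ code, and uniqueness holds only within that class. As written, your search silently omits two legitimate candidates for $C_\pi$, so your claim that the list of $264$ codes is complete does not follow.

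The gap is repairable, but it needs an argument you did not supply: one must rule out a doubly-even $C_\pi$. Since every Hermitian self-dual code over $\F_4$ is even, each word of $E_\sigma(C)$ has binary weight twice an even number, hence $\equiv 0\pmod 4$; if $C_\pi$ were doubly even, each word of $F_\sigma(C)$ would have weight $3w$ with $4\mid w$, hence also $\equiv 0\pmod 4$; and for $v\in F_\sigma(C)$, $u\in E_\sigma(C)$ the overlap $|\mbox{supp}(v)\cap\mbox{supp}(u)|$ contributes $2$ on every cycle where both are nonzero, so $\wt(v+u)=\wt(v)+\wt(u)-2\,|\mbox{supp}(v)\cap\mbox{supp}(u)|\equiv 0\pmod 4$. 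Thus $C$ would be doubly even, contradicting the presence of weight-$10$ codewords in both admissible enumerators (\ref{one}) and (\ref{two}). Alternatively, you could enlarge the search to include $2e_8$ and $d_{16}^{+}$ and discard the resulting (necessarily doubly-even) codes, which cannot have minimum distance exactly $10$. With either repair in place, the rest of your plan --- coset enumeration justified by the Lemma from \cite{Yorus}, minimum-distance filtering, and equivalence testing with $|\Aut(C)|$ as a first invariant --- coincides with the paper's proof of Proposition~\ref{prop:3-16-0}, from which the corollary follows at once because a cubic code is by definition one with a fixed-point-free automorphism of order $3$, i.e., of type $3$-$(16,0)$.
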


\subsection{Codes with an Automorphism of Orders 5, 7, and 11}
\label{sec-11}

The first author~\cite{Bou97} showed that there are no self-dual $[48, 24, 10]$  codes with an automorphism of orders 5, 7, and 11. However, these results have received less attention since even Huffman in his survey paper~\cite{Huf2005} could not eliminate these types of automorphisms. Hence it is worth sketching the nonexistence of self-dual $[48, 24, 10]$ codes with an automorphism of orders 5, 7, and 11.

\medskip

        Let $C$ be a binary singly-even self-dual $[48,24,10]$ code with an automorphism of order 11 and
   $$\sigma =(1,2,\ldots ,11)(12,\ldots ,22)(23,\ldots ,33)(34,\ldots ,44)$$
    be an automorphism of $C$. Then $\pi (F_\sigma (C))$ is a binary self-dual
 code of length 8.

\begin{lemma}
The code $\pi (F_\sigma (C))$ is generated by the matrix
$(I_4\vert I_4 +J_4)$
    up to a permutation of the last four coordinates. Here $I_4$ is the
identity
    matrix and $J_4$ is the all-one matrix.
    \end{lemma}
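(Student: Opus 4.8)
The plan is to realize $C_\pi=\pi(F_\sigma(C))$ in systematic form and then pin down the redundant half of a generator matrix using the weight-distribution congruence of Proposition \ref{weight_distrib}. Here $\sigma$ is of type $11$-$(4,4)$, so $c=f=4$ and, by Theorem \ref{thm2} together with Theorem \ref{directsum}, $C_\pi$ is a binary self-dual $[8,4]$ code. Writing a codeword as $w=(w_1,w_2)$ with $w_1$ on the four cycle-coordinates and $w_2$ on the four fixed-coordinates, the preimage has weight
\[
\wt(\pi^{-1}(w))=11\,\wt(w_1)+\wt(w_2).
\]
First I would use the minimum distance: if $\wt(w_1)=0$ then $\wt(\pi^{-1}(w))=\wt(w_2)\le 4<10$, forcing $w=0$. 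Hence $w\mapsto w_1$ is injective on the $4$-dimensional space $C_\pi$, so the four cycle-coordinates form an information set and $C_\pi$ has a generator matrix $(I_4\mid A)$ for a unique $4\times 4$ matrix $A$ over $\F_2$. Self-duality of $C_\pi$ is then equivalent to $AA^{T}=I_4$; in particular each row $A_i$ satisfies $\wt(A_i)\equiv 1\pmod 2$, so $\wt(A_i)\in\{1,3\}$.

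The key step is to exclude weight-one rows of $A$. Since no nonzero codeword of $F_\sigma(C)$ can have weight $10$ (as $11\,\wt(w_1)+\wt(w_2)=10$ is impossible once $\wt(w_1)\ge 1$), we get $B_{10}=0$, so Proposition \ref{weight_distrib} gives $A_{10}\equiv 0\pmod{11}$; as $704\equiv 0$ but $768\equiv 9\pmod{11}$, the code $C$ has weight enumerator $W_{48,1}$ and hence $A_{12}=8976\equiv 0\pmod{11}$. Now a codeword of $F_\sigma(C)$ has weight $12$ exactly when $(\wt(w_1),\wt(w_2))=(1,1)$, i.e. when $w$ is a row $(e_i\mid A_i)$ with $\wt(A_i)=1$; thus $B_{12}$ equals the number of weight-one rows of $A$. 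Proposition \ref{weight_distrib} forces $B_{12}\equiv A_{12}\equiv 0\pmod{11}$, and since $0\le B_{12}\le 4$ this number is $0$. Consequently every row of $A$ has weight $3$.

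Finally I would turn the weight-three condition into the stated normal form. Writing $A_i=\mathbf 1+e_{\rho(i)}$ (the complement of a single unit vector), the relation $AA^{T}=I_4$ gives $A_i\cdot A_j=\delta_{\rho(i),\rho(j)}$ for $i\ne j$, so $\rho$ is injective, hence a permutation of $\{1,2,3,4\}$; thus $A=J_4+P_\rho$ with $P_\rho$ the permutation matrix of $\rho$. Permuting the last four coordinates by $\rho^{-1}$ replaces $A$ by $J_4+I_4=I_4+J_4$, which is exactly the asserted matrix. I expect the only genuine obstacle to be the exclusion of weight-one rows: the minimum-distance bound alone is consistent with $C_\pi$ being the decomposable code $i_2^4$ (the preimage weights would then be $12,24,36,48$, all $\ge 10$), and it is precisely the mod-$11$ comparison of $A_{12}$ with $B_{12}$ that rules this out.
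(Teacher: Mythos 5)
Your proof is correct, and it fills a genuine gap: the paper states this lemma with no proof at all (it is imported from \cite{Bou97}), so there is nothing in-text to compare line by line. What you have written is exactly in the spirit of the paper's own machinery, and every step checks out: the projection onto the cycle coordinates is injective by the weight formula $\wt(\pi^{-1}(w))=11\wt(w_1)+\wt(w_2)$, giving the systematic form $(I_4\mid A)$ with $AA^T=I_4$ and odd-weight rows; the congruences are right ($704=11\cdot 64\equiv 0$, $768\equiv 9$, and $8976=11\cdot 816\equiv 0 \pmod{11}$), so $B_{10}=0$ forces $W_{48,1}$ and then $B_{12}\equiv 0\pmod{11}$ with $0\le B_{12}\le 4$ kills the weight-one rows; and the orthogonality computation $(\mathbf 1+e_{\rho(i)})\cdot(\mathbf 1+e_{\rho(j)})=\delta_{\rho(i),\rho(j)}$ correctly forces $A=J_4+P_\rho$, which a permutation of the four fixed coordinates (legitimate by Theorem \ref{thm:eq}(d)) normalizes to $I_4+J_4$. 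Your closing remark identifies the actual crux: minimum distance alone cannot rule out $C_\pi\cong i_2^4$ in the form $(I_4\mid I_4)$, whose preimage weights $12,24,36,48$ are all admissible, so the mod-$11$ comparison of Proposition \ref{weight_distrib} is indispensable --- this is the same congruence device the paper uses in Corollary \ref{cor:types2} for $p=47,23$. A minor streamlining, if you want it: once $W_{48,1}$ is established, $A_{14}=56896\equiv 4\pmod{11}$ and $B_{14}$ counts precisely the rows with $\wt(A_i)=3$, so $B_{14}\le 4$ forces $B_{14}=4$ in one stroke, giving all four weight-three rows directly; but your route through $B_{12}$ is equally valid.
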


    Since 10 is the multiplicative order of 2 modulo 11, $C_{\varphi}=\varphi (E_\sigma
(C)^{*})$ is a self-dual [4,2] code over the field ${\cal P}$ of
even-weight polynomials in $F_2 [x]/(x^{11}-1)$ with $2^{10}$
elements under the inner product
\begin{equation}\label{product11}
   (u,v)=u_1 v_{1}^{32} +u_2 v_{2}^{32}+u_3 v_{3}^{32} +u_4 v_{4}^{32}.
   \end{equation}

      \begin{lemma}\label{lemma:phi11}
        $C_{\varphi}$ is a $[4,2,3]$ self-dual code over the field ${\cal P}$.
    \end{lemma}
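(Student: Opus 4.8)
The plan is to determine the minimum distance of the self-dual $[4,2]$ code $C_{\varphi}$ over ${\cal P}\cong\F_{2^{10}}$ from both sides. For the upper bound, since $C_{\varphi}$ is a $[4,2]$ code over a field, the Singleton bound gives $d(C_{\varphi})\le 4-2+1=3$. It therefore suffices to prove $d(C_{\varphi})\ge 3$, i.e.\ to rule out codewords of ${\cal P}$-Hamming weight $1$ and $2$.

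The main device for the lower bound is to transport weight information from $C_{\varphi}$ back into the binary code $C$, where we know $d(C)=10$. Suppose $u\in C_{\varphi}$ has Hamming weight $w$, with support contained in $w$ of the four cycles. Since $C_{\varphi}$ is linear over the field ${\cal P}$, the one-dimensional subspace $\{\lambda u:\lambda\in{\cal P}\}$ lies in $C_{\varphi}$, and applying $\varphi^{-1}$ (filling the fixed coordinates with zeros) embeds it into $E_\sigma(C)\subseteq C$. Because $\varphi^{-1}$ is an injective $\F_2$-linear map and $|{\cal P}|=2^{10}$, the image $D$ is a binary linear code of dimension exactly $10$; its codewords are supported only on the $11w$ coordinates coming from the cycles in the support of $u$, and $\varphi^{-1}$ preserves weights cycle by cycle. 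Hence $D$ is a binary $[11w,10]$ code contained in $C$, so $d(D)\ge d(C)=10$.

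It then remains to contradict this for $w\in\{1,2\}$ using elementary bounds on binary codes. For $w=1$, $D$ is an $[11,10]$ binary code, for which Singleton already forces $d(D)\le 2<10$; in fact $D$ is exactly the even-weight $[11,10,2]$ code ${\cal P}$. For $w=2$, $D$ is a $[22,10]$ binary code, and here the Griesmer bound would require $22\ge\sum_{i=0}^{9}\lceil 10/2^i\rceil=26$, which is false, so no binary $[22,10,10]$ code exists and $d(D)<10$. Both cases contradict $d(D)\ge 10$, so $C_{\varphi}$ has no codeword of weight $1$ or $2$, whence $d(C_{\varphi})=3$. The only point that needs care --- and the sole place the argument could break --- is the verification that $D$ genuinely attains full dimension $10$ and embeds weight-preservingly into $C$ on its support; this rests on the injectivity and $\F_2$-linearity of $\varphi^{-1}$ together with the fact that elements of $E_\sigma(C)$ vanish on the fixed points, after which ruling out the binary codes is a pure Singleton/Griesmer computation.
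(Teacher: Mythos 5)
Your proof is correct and follows essentially the same route as the paper's method: since $\mathcal{P}\cong GF(2^{10})$ is a field (Theorem \ref{primroot}), the $\mathcal{P}$-line through a weight-$w$ codeword of $C_\varphi$ pulls back under $\varphi^{-1}$ to a binary $[11w,10,\ge 10]$ subcode of $C$, and ruling out $w=1,2$ by Singleton/Griesmer while Singleton over $\mathcal{P}$ caps $d$ at $3$ is exactly the weight-transfer argument the paper uses (compare its order-$5$ analysis, where nonzero circulants of rank $p-1$ yield subcodes of effective length $p\,\wt(u)$ that are eliminated by nonexistence of short codes). Your care about the full dimension $10$ of the pullback and the vanishing of $E_\sigma(C)$ on the fixed points is well placed and both points hold.
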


By considering all possibilities of $C_{\varphi}$, one can get the following.

        \begin{theorem}{\rm (\cite{Bou97})}
        There does not exist a self-dual $[48,24,10]$ code with an automorphism
    of order $11$.
    \end{theorem}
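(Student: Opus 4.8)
The plan is to exploit that $2$ is a primitive root modulo $11$, so that Theorem \ref{primroot} applies directly: $C$ is self-dual precisely when $C_\pi$ is a self-dual binary $[8,4]$ code and $C_\varphi$ is a Hermitian self-dual $[4,2]$ code over $\mathcal P$. By the preceding lemma $C_\pi$ is, up to a permutation of its last four (fixed-point) coordinates, generated by $(I_4\mid I_4+J_4)$; since $(I_4+J_4)(I_4+J_4)^{T}=I_4$ over $\F_2$, this is the unique $[8,4,4]$ self-dual code. Hence the rigid part $F_\sigma(C)=\pi^{-1}(C_\pi)$ is completely determined, and only $C_\varphi$ remains to be analysed.

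First I would pin down the weight distribution of $F_\sigma(C)$. A codeword of $C_\pi$ splitting as weight $w_1$ on the four cycle coordinates and $w_2$ on the four fixed coordinates lifts to a codeword of $F_\sigma(C)$ of binary weight $11w_1+w_2$. Running through the $[8,4,4]$ codewords gives $F_\sigma(C)$-weights $0,14,24,34,48$ with multiplicities $1,4,6,4,1$; in particular $B_{10}=B_{12}=0$. Proposition \ref{weight_distrib} then forces $A_{10}\equiv B_{10}\equiv 0\pmod{11}$. The enumerator $W_{48,2}$ has $A_{10}=768\equiv 9\pmod{11}$, so it is eliminated outright, while $W_{48,1}$ has $A_{10}=704\equiv 0\pmod{11}$ and survives this test (indeed its low coefficients $8976,56896,267575$ are all consistent with the $B_i$ modulo $11$).

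The remaining, and harder, case $W_{48,1}$ must be settled through $C_\varphi$. By Lemma \ref{lemma:phi11} this is a $[4,2,3]$ code over $\mathcal P\cong\F_{1024}$, hence MDS and Hermitian self-dual for the form $(u,v)=\sum u_i v_i^{32}$. The plan is to classify such codes up to the equivalences of Theorem \ref{thm:eq} --- the Galois substitutions $x\mapsto x^{t}$, the coordinate scalings by powers of $x$, and the permutations of the four cycles --- to reconstruct $C=\langle \pi^{-1}(C_\pi),\varphi^{-1}(C_\varphi)\rangle$ for each class, and to show the minimum distance drops below $10$. The lever is that a weight-$3$ codeword of $C_\varphi$ has three nonzero $\mathcal P$-coordinates, and under $\varphi^{-1}$ it becomes a codeword of $E_\sigma(C)\subseteq C$ whose binary weight is the sum of the three polynomial weights, each an even number $\ge 2$; it suffices to produce, in every admissible $C_\varphi$, a weight-$3$ codeword whose three coordinate weights sum to at most $8$.

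I expect this final step to be the main obstacle. Since the $1023$ nonzero elements of $\mathcal P$ have polynomial weights averaging about $5.5$, a typical weight-$3$ codeword is far too heavy, so lightness is not automatic and must be wrung out of the self-dual structure. Concretely, for each minimum-weight support the weight-$3$ codewords form a single line $\{\lambda\mathbf a\}$, and one must search the scalar multiples $\lambda\in\F_{1024}^{\ast}$ for one making all three coordinates simultaneously light. Carrying this out over $\F_{1024}$ is the computational heart of the argument; the finiteness of the equivalence classes of $C_\varphi$ is what makes it terminate, and in every case a codeword of weight at most $8$ appears, contradicting $d=10$ and completing the proof.
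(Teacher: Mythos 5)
Your structural reduction is exactly the paper's: since $2$ is a primitive root modulo $11$ you invoke Theorem \ref{primroot}, take $C_\pi$ to be the $(I_4\mid I_4+J_4)$ code from the paper's lemma, compute the lifted weights $0,14,24,34,48$ of $F_\sigma(C)$, and arrive via Lemma \ref{lemma:phi11} at a Hermitian self-dual $[4,2,3]$ code $C_\varphi$ over $\mathcal{P}\cong \F_{2^{10}}$. All of that is sound, and your use of Proposition \ref{weight_distrib} to dispose of $W_{48,2}$ immediately ($768\equiv 9\pmod{11}$ while $B_{10}=0$) is a correct observation that the paper's sketch does not make explicit. The problem is that everything after this point is a plan, not a proof. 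The entire content of the theorem is concentrated in the claim that every admissible $C_\varphi$ forces a binary codeword of weight below $10$, and you neither classify the $[4,2,3]$ Hermitian self-dual codes over $\F_{1024}$ (the standard-form conditions $1+a^{33}+b^{33}=0$, $1+c^{33}+d^{33}=0$, $ac^{32}+bd^{32}=0$ admit many solutions, since $x\mapsto x^{33}$ maps $\F_{1024}^{*}$ onto the order-$31$ subgroup $\F_{32}^{*}$), nor exhibit the promised light codewords: ``in every case a codeword of weight at most 8 appears'' is precisely what has to be proved, and in your text it is simply asserted. The paper is equally brief at this spot --- ``by considering all possibilities of $C_\varphi$'' --- but it can afford to be, because it cites the complete case analysis in \cite{Bou97}; a blind proof cannot, so the crux of the nonexistence argument is missing.

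There is also a logical soft spot in the plan itself: restricting the search to weight-$3$ codewords of $C_\varphi$ is sufficient only if it happens to succeed. If for some equivalence class every weight-$3$ line $\{\lambda\mathbf{a}\}$ satisfied $\wt(\lambda a_1)+\wt(\lambda a_2)+\wt(\lambda a_3)\ge 10$ for all $\lambda\in\F_{1024}^{*}$, your search would terminate without a contradiction even though the code could still be ruled out through its weight-$4$ codewords, whose binary weight can be as small as $8=2+2+2+2$, or through a counting argument (e.g., comparing the $704$ weight-$10$ words demanded by $W_{48,1}$, all necessarily in $E_\sigma(C)$ since $F_\sigma(C)$ has minimum weight $14$, against what the orbit structure of $C_\varphi$ can supply). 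You give no argument that the weight-$3$ test must succeed and no fallback if it does not, so even as a fully executed computation the strategy as described is not guaranteed to close the case $W_{48,1}$.
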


 Let $C$ be a binary singly-even self-dual $[48,24,10]$ code with
    an automorphism of order 7 and let
    $$\sigma =(1,2,\ldots ,7)(8,\ldots ,14)\cdots (36,\ldots ,42).$$
    Now $\pi (F_\sigma (C))$ is a [12,6] self-dual code. Since the minimal
    distance of $F_\sigma (C)$  is at least 10, $\pi (F_\sigma (C))$
    has a generator matrix of the form $(I_6\vert A)$, where $I_6$ is
    the identity matrix. We obtain a unique possibility for $A$ up to a
permutation
    of its columns:
\begin{equation}\label{eq:p=7pi}
    \left(
    \begin{array}{cccccc}
1 & 1 & 1 & 0 & 0 & 0\\ 1 & 1 & 0&1& 0 & 0\\ 1 &1& 0&0&1&0\\
1&1&0&0&0&1\\ 0&1&1&1&1&1\\ 1&0&1 & 1 & 1 & 1\\
\end{array}\right).
\end{equation}


Since $2^3\equiv 1\pmod 7$, 2 is not a primitive root modulo 7 and
${\cal P}$ is not a field. Now ${\cal P}=I_1\oplus I_2$, where
$I_1$ and $I_2$ are cyclic codes generated by the idempotents
$e_1(x)=1+x+x^2+x^4$ and $e_2(x)=1+x^3+x^5+x^6$, respectively, so
$$I_j=\{ 0,e_j(x),xe_j(x),\ldots ,x^6 e_j(x)\} ,j=1,2.$$ Moreover, $I_1$ and $I_2$ are fields of 8 elements \cite{Yorus}.

In this case $C_\varphi=\varphi (E_\sigma (C)^{*})=M_1\oplus M_2$, where
$M_j=\{u\in C_{\varphi}\mid u_i\in I_j, i=1,\dots,6\}$ is a linear
code over the field $I_j$, $j=1,2$, and
$\dim_{I_1}M_1+\dim_{I_2}M_2=6$ \cite{Yorus}. Since the minimum
weight of the code $C$ is 10, every vector of $C_{\varphi}$ must
contain at least three nonzero coordinates. Hence the minimum
weight of $M_j$ is at least 3, and so the dimension of $M_j$ is at
least 2, $j=1,2$.

By Theorem \ref{thm2} for every two vectors
$(u_1(x),\ldots,u_6(x))$ from $ M_1$ and $(v_1(x),\ldots ,v_6(x))$
from $M_2$ we have
$$u_1(x) v_1(x^{-1})+\cdots +u_6(x)v_6(x^{-1})=0.$$
Since $e_1(x^{-1})=e_2(x)$ and $e_1(x)e_2(x)=0$, $M_2$ determines
the whole code $C_{\varphi}$. The substitution $x\to x^3$ in
$\varphi (E_\sigma (C)^{*})$ interchanges $M_1$ and $M_2$ and
therefore we may assume that $dim_{I_1}M_1\ge dim_{I_2}M_2$. We
have two cases, $dim_{I_2}M_2=2$ and $dim_{I_2}M_2=3$. Each case does not produce a self-dual $[48,24,10]$ code with an automorphism
    of order $7$ as follows.

        \begin{theorem}{\rm (\cite{Bou97})}
        There does not exist a self-dual $[48,24,10]$ code with an automorphism
    of order $7$.
    \end{theorem}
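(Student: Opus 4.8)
The plan is to prove nonexistence for order $7$ by examining the two remaining cases, $\dim_{I_2} M_2 = 2$ and $\dim_{I_2} M_2 = 3$, and deriving a contradiction with the minimum distance $d=10$ in each. Recall that the field $I_j$ has $8$ elements, and by Theorem~\ref{thm2} the component $M_1$ is completely determined by $M_2$ via the pairing $u_1(x)v_1(x^{-1})+\cdots+u_6(x)v_6(x^{-1})=0$ together with the relations $e_1(x^{-1})=e_2(x)$ and $e_1(x)e_2(x)=0$. Concretely, since the inner product decouples over $I_1$ and $I_2$, the condition forces $M_1$ to be the dual (with respect to the induced form over $I_1\cong\F_8$) of $M_2$. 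Thus once $M_2$ is fixed as a code over $\F_8$, the pair $(M_1,M_2)$, and hence $C_\varphi$ and the full code $C$, is determined.

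First I would treat $\dim_{I_2}M_2 = 3$, so $\dim_{I_1}M_1 = 3$ as well. Here both $M_1$ and $M_2$ are self-dual $[6,3]$ codes over $\F_8$ under the given Hermitian-type form, and each must have minimum weight at least $3$ so that every nonzero codeword of $C_\varphi$ meets at least three of the cycles (otherwise $\varphi^{-1}$ produces a codeword of weight below $10$, since a nonzero entry of $I_j$ corresponds to a weight-$4$ vector on a single $7$-cycle, giving weight at most $8$ when only two cycles are hit). The strategy is to enumerate self-dual $[6,3]$ codes over $\F_8$ with minimum weight $\ge 3$ up to the equivalences of Theorem~\ref{thm:eq} (the substitution $x\to x^t$, scaling by powers of $x$ on each coordinate, and coordinate permutations, which over $\F_8$ amount to monomial transformations). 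For each surviving $M_2$, one reconstructs $M_1$, builds $\gen\,C$ via $\pi^{-1}$ and $\varphi^{-1}$, and checks the true minimum weight of $C$. The claim is that in every case a codeword of weight strictly less than $10$ appears, most likely arising from combining a low-weight vector of $F_\sigma(C)$ (constant on cycles, controlled by the fixed matrix~(\ref{eq:p=7pi})) with a suitable element of $E_\sigma(C)$.

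Next I would handle $\dim_{I_2}M_2 = 2$, whence $\dim_{I_1}M_1 = 4$. Now $M_2$ is a $[6,2]$ code over $\F_8$ of minimum weight $\ge 3$, which is far more constrained; up to the monomial equivalences of Theorem~\ref{thm:eq} there are only a handful of such codes (a short enumeration of the two generators, normalized so that the weight-$3$ and weight-$\ge 3$ conditions hold). Its determined partner $M_1$ is then a $[6,4]$ code, and again I would assemble $C$ and compute its minimum weight, expecting to find a vector of weight below $10$ in each case. I expect the main obstacle to be organizing the equivalence reduction efficiently: the group in Theorem~\ref{thm:eq} acting on codes over $\F_8$ is large, and without a careful normalization the number of candidate $M_2$ to inspect could balloon. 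Concentrating the argument on the minimum-weight constraint — showing that forcing $M_1$ and $M_2$ to have minimum weight $\ge 3$ over such short lengths is already incompatible with lifting to a genuine $[48,24,10]$ code — is the cleanest route, and the bookkeeping of how weights on individual $7$-cycles add up across the $\pi$- and $\varphi$-parts is where the real care is needed.
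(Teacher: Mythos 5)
Your structural setup matches the paper's: you correctly recall that $\mathcal{P}=I_1\oplus I_2$ with each $I_j$ a field of $8$ elements, that $C_\varphi=M_1\oplus M_2$ with $\dim_{I_1}M_1+\dim_{I_2}M_2=6$, that each $M_j$ must have minimum weight at least $3$ (nonzero elements of $I_j$ have binary weight exactly $4$, so two nonzero cycle-coordinates give weight $8<10$), and that the substitution $x\to x^3$ lets one assume $\dim_{I_2}M_2\in\{2,3\}$. But there is a concrete error in your treatment of the case $\dim_{I_2}M_2=3$: you assert that $M_1$ and $M_2$ are each \emph{self-dual} $[6,3]$ codes over $\F_8$ ``under the given Hermitian-type form.'' This is wrong on two counts. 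The duality condition $\sum_{i=1}^{6}u_i(x)v_i(x^{-1})=0$ is \emph{vacuous} when $u$ and $v$ lie in the same summand $M_j$, because $x\to x^{-1}$ interchanges $I_1$ and $I_2$ (as $e_1(x^{-1})=e_2(x)$) and $I_1\cdot I_2=0$; the condition bites only across the summands, forcing $M_1$ to be the dual, under the ordinary inner product over $I_1\cong\F_8$, of the $x\mapsto x^{-1}$-conjugate of $M_2$ --- exactly the statement in the paper that ``$M_2$ determines the whole code $C_\varphi$.'' Nothing forces $M_2$ to coincide with that dual. (Moreover, $\F_8$ carries no Hermitian form in the usual sense, since it is not a quadratic extension; the Hermitian structure of Theorem~\ref{primroot} requires $2$ to be a primitive root modulo $p$, which fails for $p=7$.) If you enumerate only ``self-dual'' $[6,3]$ codes you discard almost all admissible $M_2$, so the $\dim_{I_2}M_2=3$ case, as planned, is incomplete and the nonexistence conclusion does not follow.

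A second gap: your argument is a plan rather than a proof --- the contradictions are only conjectured (``The claim is that in every case a codeword of weight strictly less than $10$ appears''). In the case $\dim_{I_2}M_2=2$ your hoped-for shortcut cannot work: $M_1$ is then a $[6,4]$ code over $\F_8$ with minimum weight at least $3$, i.e.\ an MDS $[6,4,3]$ code, and such codes exist in abundance, so ``$d(M_j)\ge 3$ is already incompatible'' is false. Worse, $d(M_1),d(M_2)\ge 3$ does not even guarantee $d(E_\sigma(C)^*)\ge 10$, because a mixed vector $u'+u''$ with $u'\in M_1$, $u''\in M_2$ can have cycle-coordinates of binary weight $2$ (e.g.\ $e_1(x)+xe_2(x)$ has weight $2$), so three overlapping nonzero coordinates can give total weight $6$; the elimination must genuinely use these mixed vectors and the interaction with $F_\sigma(C)$, whose $[12,6]$ projection is pinned down by the matrix~(\ref{eq:p=7pi}). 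You also need to account for the relative gluing of $C_\pi$ and $C_\varphi$ (the analogue of the transversal of $\Aut(C_\pi)$ used in the $3$-$(16,0)$ classification), which your reduction ``up to Theorem~\ref{thm:eq}'' glosses over: only cycle permutations preserving $C_\pi$ may be used to normalize $M_2$. The paper itself defers the detailed case analysis to \cite{Bou97}, but your plan, executed as written, would both search the wrong set of candidates in one case and rely on an unproven expectation in the other.
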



According to Corollary \ref{cor:types1}, if the self-dual
$[48,24,10]$ code $C$ has an automorphism $\sigma$ of order 5 then
$\sigma$ is of type $5$-$(8,8)$. Here we prove that this is not
possible.

 Let $C$ has an automorphism $\sigma$ of type
  $5$-$(8,8)$. Then $C_\varphi$ is a self-dual $[8,4]$ code over the field ${\cal P}$ with 16 elements under the inner product
    $(u,v)=u_1 v_{1}^4+u_2 v_{2}^4+\cdots +u_8 v_{8}^4$, $u,v\in C_{\varphi}$. There is
    one-to-one correspondence between the elements of the field ${\cal
    P}$ and the set of $5\times 5$ cirulants with rows of even
    weight defined by
    \[ a_0+a_1x+a_2x^2+a_3x^3+a_4x^4 \ \mapsto \ \left(
    \begin{array}{ccccc}
    a_0&a_1&a_2&a_3&a_4\\
    a_4&a_0&a_1&a_2&a_3\\
    a_3&a_4&a_0&a_1&a_2\\
    a_2&a_3&a_4&a_0&a_1\\
    a_1&a_2&a_3&a_4&a_0\\
    \end{array}\right).
    \]
    Moreover, the rank of a nonzero circulant of this type is 4.
    Therefore, to any nonzero vector $u\in C_{\varphi}$ correspond a
    subcode of $E_{\sigma}(C)^*$ of length 40 and dimension 4.
    Moreover, the effective length of this subcode is $5\wt(u)$.
    Since self-orthogonal $[15,4,10]$ and $[20,4,10]$ codes do not
    exist (see \cite{BBGO}), we have $\wt(u)\ge 5$. Hence $C_\varphi$ must be an MDS
    $[8,4,5]$ Hermitian self-dual code over the field ${\cal P}\cong
    GF(16)$. Huffman proved in \cite{Huff} that such codes do not exists.
    So a self-dual $[48,24,10]$ with an automorphism of type
    5-(8,8) does not exist.

 \begin{theorem}{\rm (\cite{Bou97})}\label{thm:aut5}
        A self-dual $[48,24,10]$ code $C$ does not have automorphisms of
    order 5.
    \end{theorem}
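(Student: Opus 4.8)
The plan is to assume, for contradiction, that a self-dual $[48,24,10]$ code $C$ carries an automorphism $\sigma$ of order $5$, and to force an impossible code over $GF(16)$. First I would invoke Corollary \ref{cor:types1}, which lists all admissible types of an odd-prime-order automorphism for such a code; for $p=5$ the only surviving type is $5$-$(8,8)$. So I may assume $\sigma$ has $c=8$ independent $5$-cycles and $f=8$ fixed points, as in the displayed normal form.

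Next, since $2$ is a primitive root modulo $5$ (because $2^4\equiv 1$ and no smaller positive power of $2$ is $\equiv 1\pmod 5$), Theorem \ref{primroot} applies directly. It tells me that the associated code $C_\varphi=\varphi(E_\sigma(C)^*)$ is a Hermitian self-dual code of length $c=8$ over the field $\mathcal{P}$ of even-weight polynomials in $\F_2[x]/(x^5-1)$, a field with $2^{4}=16$ elements, under the inner product $(u,v)=\sum_{i=1}^{8}u_i v_i^{4}$ (here $2^{(p-1)/2}=2^{2}=4$). Thus $C_\varphi$ is an $[8,4]$ Hermitian self-dual code over $GF(16)$.

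The crux of the argument is to bound the minimum distance of $C_\varphi$ from below, and this is the step I expect to be the main obstacle. Using the one-to-one correspondence between elements of $\mathcal{P}$ and $5\times 5$ even-weight circulants over $\F_2$ (a nonzero such circulant having rank $4$), a fixed nonzero $u\in C_\varphi$ generates a $\mathcal{P}$-line $\{\lambda u:\lambda\in\mathcal{P}\}$ whose image under $\varphi^{-1}$ is a binary subcode of $E_\sigma(C)^*$ of dimension $4$ (since $|\mathcal{P}|=16=2^4$), supported on exactly the $5\,\wt(u)$ coordinates lying in cycles where $u$ is nonzero. This subcode is self-orthogonal and, being contained in $C$, has minimum weight at least $10$. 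If $\wt(u)=3$ or $\wt(u)=4$, I would therefore obtain a self-orthogonal binary $[15,4,10]$ or $[20,4,10]$ code; by the nonexistence results for short self-orthogonal codes in \cite{BBGO} neither exists. Hence every nonzero $u\in C_\varphi$ satisfies $\wt(u)\ge 5$, that is, $C_\varphi$ is an $[8,4,\ge 5]$ code.

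Finally, the Singleton bound gives $d\le n-k+1=8-4+1=5$, so in fact $C_\varphi$ must be an MDS $[8,4,5]$ Hermitian self-dual code over $GF(16)$. The closing step is to quote Huffman's result in \cite{Huff} that no such MDS Hermitian self-dual code exists over $GF(16)$, which contradicts the existence of $C_\varphi$ and hence of $\sigma$. Once the minimum-distance bound $\wt(u)\ge 5$ is secured by the circulant-rank bookkeeping together with the two external nonexistence facts, the MDS identification and the appeal to \cite{Huff} are immediate, completing the proof.
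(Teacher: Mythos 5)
Your proposal is correct and follows essentially the same route as the paper: reduce to type $5$-$(8,8)$ via Corollary \ref{cor:types1}, apply Theorem \ref{primroot} to get a Hermitian self-dual $[8,4]$ code over $\mathcal{P}\cong GF(16)$, use the rank-$4$ even-weight circulant correspondence together with the nonexistence of self-orthogonal $[15,4,10]$ and $[20,4,10]$ codes from \cite{BBGO} to force $\wt(u)\ge 5$, and then conclude $C_\varphi$ would be an MDS $[8,4,5]$ Hermitian self-dual code, ruled out by Huffman in \cite{Huff}. Your only additions beyond the paper's sketch are making the Singleton bound explicit and spelling out why the $\mathcal{P}$-line through $u$ yields a binary subcode of dimension $4$ and effective length $5\,\wt(u)$, both of which the paper leaves implicit.
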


\begin{lemma}{\rm (\cite{Bou97})}\label{lem:order3}
A self-dual $[48, 24, 10]$ code with an automorphism of order $3$ has weight enumerator $W_{48, 2}(y)$.
\end{lemma}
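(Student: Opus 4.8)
The plan is to reduce to the single admissible automorphism type and then play the weight distributions of $C$ and $F_\sigma(C)$ against each other modulo $3$. First I would invoke Corollary~\ref{cor:types1}, which guarantees that an automorphism of order $3$ of a self-dual $[48,24,10]$ code must be of type $3$-$(12,12)$, $3$-$(14,6)$, or $3$-$(16,0)$. Propositions~\ref{prop:p3c12} and~\ref{prop:p3c14} already rule out the first two types, so the automorphism $\sigma$ is necessarily of type $3$-$(16,0)$: it decomposes into $16$ independent $3$-cycles with no fixed points.

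The key observation concerns the structure of the fixed code $F_\sigma(C)$ for this type. Any $v \in F_\sigma(C)$ is constant on each cycle of $\sigma$; since there are no fixed points and all $16$ cycles have length $3$, the restriction of $v$ to each cycle is either $000$ or $111$. Consequently every codeword of $F_\sigma(C)$ has weight divisible by $3$. In particular, writing $(B_0,B_1,\dots,B_{48})$ for the weight distribution of $F_\sigma(C)$, we get $B_{10}=0$, simply because $10$ is not a multiple of $3$.

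Now I would apply Proposition~\ref{weight_distrib} with $p=3$, which yields $A_{10}\equiv B_{10}\equiv 0\pmod 3$. Comparing with the two admissible enumerators $W_{48,1}$ and $W_{48,2}$, the coefficient of $y^{10}$ equals $704$ in the former and $768$ in the latter. Since $704\equiv 2\pmod 3$ whereas $768\equiv 0\pmod 3$, the enumerator $W_{48,1}$ is incompatible with $A_{10}\equiv 0\pmod 3$ and must be discarded, leaving $W_{48,2}(y)$ as the weight enumerator of $C$. I expect no serious obstacle here: the only point requiring care is the bookkeeping that type $3$-$(16,0)$ forces all weights in $F_\sigma(C)$ to be multiples of $3$, which is precisely what eliminates the $704$ coefficient modulo $3$ (one could equally well use $A_{14}$, since $56896\equiv 1\pmod 3$ while $B_{14}=0$, but the $A_{10}$ comparison is the cleanest).
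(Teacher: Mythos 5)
Your proof is correct and is essentially the argument the paper relies on: the paper states this lemma without its own proof (citing \cite{Bou97}), but your chain --- reduction to type $3$-$(16,0)$ via Propositions~\ref{prop:p3c12} and~\ref{prop:p3c14}, then $B_{10}=0$ because every codeword of $F_\sigma(C)$ is constant on the sixteen $3$-cycles and so has weight divisible by $3$, then $A_{10}\equiv B_{10}\pmod 3$ by Proposition~\ref{weight_distrib}, which rules out $W_{48,1}$ since $704\equiv 2\pmod 3$ while $768\equiv 0\pmod 3$ --- is exactly the mod-$p$ congruence technique the paper itself deploys in proving Corollary~\ref{cor:types2} for $p=47$ and $p=23$, and the propositions you invoke are proved there without reference to the weight enumerator, so there is no circularity. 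Your arithmetic checks out, including the side remark that $A_{14}$ would serve equally well ($56896\equiv 1$ and $57600\equiv 0\pmod 3$, with $B_{14}=0$ since $3\nmid 14$).
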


Putting together the above results, we have the following theorems.

\begin{theorem}{\rm (\cite{Bou97})}
If $C$ is a singly-even self-dual $[48,24,10]$ code with weight enumerator $W_{48, 1}(y)$, the automorphism group of $C$ is of order $2^s$ with $s \ge 0$.
\end{theorem}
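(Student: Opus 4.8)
The plan is to prove that no odd prime divides $|\Aut(C)|$; the conclusion $|\Aut(C)| = 2^s$ then follows at once from Cauchy's theorem. Indeed, if an odd prime $p$ divided $|\Aut(C)|$, then $\Aut(C)$ would contain an element of order $p$, that is, $C$ would admit an automorphism of order $p$. Hence it suffices to rule out automorphisms of $C$ of every odd prime order.

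First I would appeal to Corollary \ref{cor:types1}, which already confines the possible odd prime orders of an automorphism of any singly-even self-dual $[48,24,10]$ code to $47$, $23$, $11$, $7$, $5$, and $3$: every other odd prime is excluded outright by the type constraints of Theorem \ref{aut_type}. So only these six primes need to be treated, and all but one of them can be dispatched uniformly.

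Next, the primes $47$ and $23$ are removed by Corollary \ref{cor:types2}; the primes $11$ and $7$ by the nonexistence theorems for orders $11$ and $7$ proved above; and the prime $5$ by Theorem \ref{thm:aut5}. None of these five conclusions refers to the weight enumerator, so each applies to $C$ regardless of its weight distribution.

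The remaining prime $3$ is the only point at which the hypothesis $W_{48,1}(y)$ is actually used. By Lemma \ref{lem:order3}, a self-dual $[48,24,10]$ code carrying an automorphism of order $3$ must have weight enumerator $W_{48,2}(y)$. Since $C$ has weight enumerator $W_{48,1}(y)$, it therefore has no automorphism of order $3$. Combining this with the previous paragraph, no odd prime divides $|\Aut(C)|$, so $|\Aut(C)| = 2^s$ for some $s \ge 0$. I expect no genuine obstacle here: all the deep work sits in the earlier lemmas and theorems, and the only thing to get right is that the order-$3$ case is closed not by a direct type analysis but by invoking Lemma \ref{lem:order3} under the weight-enumerator hypothesis.
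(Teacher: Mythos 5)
Your proposal is correct and matches the paper's own (implicit) argument: the paper states this theorem by ``putting together the above results,'' i.e.\ exactly your chain --- Corollary~\ref{cor:types1} restricting odd primes to $47,23,11,7,5,3$, Corollary~\ref{cor:types2} and the nonexistence theorems for orders $11$, $7$, $5$, and Lemma~\ref{lem:order3} eliminating order $3$ under the $W_{48,1}$ hypothesis, with Cauchy's theorem converting the absence of odd-prime-order automorphisms into $|\Aut(C)|=2^s$. You also correctly identified that the weight-enumerator hypothesis enters only in the order-$3$ case, which is the one subtle point in the assembly.
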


\begin{theorem}{\rm (\cite{Bou97})}
If $C$ is a self-dual singly-even $[48,24,10]$ code with weight
enumerator $W_{48, 2}(y)$, then the automorphism group of $C $is of order $2^s 3^t$ with $s \ge 0, t \ge 0$.
\end{theorem}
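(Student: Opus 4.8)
The plan is to reduce the statement about the entire automorphism group to a statement about automorphisms of prime order, and then to invoke the nonexistence results already assembled in this section. Since $\Aut(C)$ is a finite group, by Cauchy's theorem every prime $p$ dividing $|\Aut(C)|$ is realized as the order of some element $\sigma\in\Aut(C)$; that is, $C$ admits an automorphism of prime order $p$. Thus it suffices to show that the only primes that can occur as orders of automorphisms of $C$ are $2$ and $3$, for then $|\Aut(C)|$ can contain no other prime factor and must be of the form $2^s3^t$ with $s,t\ge 0$.

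First I would dispose of the odd primes. A code with weight enumerator $W_{48,2}(y)$ is singly-even, since the weight $10$ is not divisible by $4$, so Corollary \ref{cor:types1} applies: any automorphism of odd prime order $p$ must be of one of the types $47$-$(1,1)$, $23$-$(2,2)$, $11$-$(4,4)$, $7$-$(6,6)$, $5$-$(8,8)$, $3$-$(12,12)$, $3$-$(14,6)$, or $3$-$(16,0)$, so in particular $p\in\{3,5,7,11,23,47\}$. Corollary \ref{cor:types2} rules out $p=23$ and $p=47$; Theorem \ref{thm:aut5} rules out $p=5$; and the two preceding nonexistence theorems of this subsection rule out $p=7$ and $p=11$. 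Hence $p=3$ is the only admissible odd prime, so the only primes dividing $|\Aut(C)|$ are $2$ and $3$, and therefore $|\Aut(C)|=2^s3^t$ with $s,t\ge 0$, as claimed.

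The genuinely hard content lies not in this final assembly but in the earlier part of the section: the exclusion of orders $5$, $7$, and $11$ rests on the structure theory of $C_\varphi$ over the appropriate field $\mathcal{P}$ and on the nonexistence of the corresponding MDS or short self-orthogonal codes. For the present theorem the only point to be careful about is that Cauchy's theorem, combined with the \emph{complete} type list of Corollary \ref{cor:types1}, accounts for every odd prime that could possibly divide the group order; no separate bound on $p$ is required, because the self-duality and minimum-distance constraints already force the finite list of admissible types. I would also remark that the hypothesis $W_{48,2}(y)$ is what permits $t>0$: by Lemma \ref{lem:order3} an automorphism of order $3$ forces this weight enumerator, which is precisely why the companion statement for $W_{48,1}(y)$ gives a pure $2$-group.
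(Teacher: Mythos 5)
Your proposal is correct and follows essentially the same route as the paper, which obtains this theorem by ``putting together the above results'': the type list of Corollary \ref{cor:types1}, the elimination of $p=23,47$ in Corollary \ref{cor:types2}, and the nonexistence theorems for orders $5$, $7$, and $11$, with Cauchy's theorem supplying the (implicit in the paper) reduction from the group order to prime-order automorphisms. Your closing remark about Lemma \ref{lem:order3} correctly identifies why the $W_{48,2}$ hypothesis permits $t>0$ while the $W_{48,1}$ companion yields a $2$-group.
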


Therefore, using Proposition~\ref{prop:3-16-0}, we summarize our
result below.

\begin{theorem}
If there is a self-dual $[48, 24, 10]$ code with an automorphism  of
type $p$-$(c,f)$ with $p$ being an odd prime, then $p=3, c=16, f=0
$. Moreover, there are exactly $264$ inequivalent binary $[48, 24,
10]$ self-dual codes with an automorphism of odd prime order, which
is in fact of type $3$-$(16,0)$. Hence there are exactly $264$
inequivalent binary cubic self-dual $[48, 24, 10]$ codes.
\end{theorem}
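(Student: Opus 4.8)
The plan is to assemble the final theorem entirely from results already established in the preceding sections, so that the proof is essentially a bookkeeping exercise that collects the partial classifications into one statement. First I would recall that by Corollary~\ref{cor:types1} the only a priori possible types of an automorphism of odd prime order for a self-dual $[48,24,10]$ code are $47$-$(1,1)$, $23$-$(2,2)$, $11$-$(4,4)$, $7$-$(6,6)$, $5$-$(8,8)$, $3$-$(12,12)$, $3$-$(14,6)$, and $3$-$(16,0)$. The strategy is then to eliminate every type except the last one, invoking the nonexistence results proved above: Corollary~\ref{cor:types2} rules out orders $47$ and $23$; Theorem~\ref{thm:aut5} rules out order $5$; and the two nonexistence theorems attributed to \cite{Bou97} rule out orders $7$ and $11$. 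For the remaining prime $p=3$, Propositions~\ref{prop:p3c12} and~\ref{prop:p3c14} eliminate the types $3$-$(12,12)$ and $3$-$(14,6)$. This leaves $3$-$(16,0)$ as the unique surviving type, which establishes the first assertion $p=3$, $c=16$, $f=0$.

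Next I would address the enumeration. Having shown that any automorphism of odd prime order must be of type $3$-$(16,0)$, the counting reduces directly to Proposition~\ref{prop:3-16-0}, which asserts that there are exactly $264$ inequivalent self-dual $[48,24,10]$ codes admitting an automorphism of this type. Since these are now the only codes of odd-prime-order symmetry, the count of $264$ transfers verbatim to the full class. The final sentence about cubic self-dual codes then follows from the observation already made in the introduction: a cubic self-dual code is by definition one with a fixed-point-free automorphism of order $3$, which is precisely an automorphism of type $3$-$(16,0)$, so the class of cubic self-dual $[48,24,10]$ codes coincides with the class just enumerated.

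The one place where the argument is not purely a citation is in linking the \emph{existence} of \emph{some} automorphism of odd prime order to the specific type $3$-$(16,0)$, rather than merely to the prime $p=3$. Here I would note that Propositions~\ref{prop:p3c12} and~\ref{prop:p3c14} do more than forbid a $3$-cycle structure with fixed points: together they show that the only fixed-point structure compatible with minimum distance $10$ is $f=0$, $c=16$. The subtlety is that $\sigma$ of order $3$ could in principle have a cycle type with $f>0$, but these are exactly the cases excluded. Consequently the decomposition of any order-$3$ automorphism is forced to be $16$ three-cycles with no fixed points, which is the defining property of a free automorphism and hence of the cubic construction.

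I expect the main obstacle, such as it is, to be purely organizational rather than mathematical: ensuring that the chain of eliminations is exhaustive and that each excluded type is matched to the correct earlier result, with no prime or cycle type overlooked. In particular one must verify that Corollary~\ref{cor:types1} truly lists all candidate types, so that eliminating the enumerated eight leaves nothing uncovered. Since all the hard analytic and computational work---the quaternary Hermitian self-dual code classifications, the MDS nonexistence arguments, and the exhaustive search over transversals of $\Aut(B)$ in $S_{16}$---has already been carried out in the lemmas, propositions, and theorems above, the final theorem is obtained simply by concatenating these conclusions.
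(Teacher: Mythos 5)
Your proposal is correct and matches the paper's own treatment exactly: the paper states the theorem with the preamble ``Putting together the above results'' and ``Therefore, using Proposition~\ref{prop:3-16-0}, we summarize our result below,'' i.e., it is precisely the bookkeeping argument you describe---Corollary~\ref{cor:types1} for the candidate list, Corollary~\ref{cor:types2} and the nonexistence theorems for orders $47$, $23$, $11$, $7$, $5$, Propositions~\ref{prop:p3c12} and~\ref{prop:p3c14} for the remaining order-$3$ types, and Proposition~\ref{prop:3-16-0} for the count of $264$. Your added remark that the eliminations force $f=0$, $c=16$ for any order-$3$ automorphism (so the class coincides with the cubic codes) is the same observation the paper relies on implicitly.
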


\section*{Acknowledgment}
S. Bouyuklieva was partially supported by VTU Science Fund under Contract RD-642-01/26.07.2010.
J.-L. Kim was partially supported by the Project Completion Grant (year 2011-2012) at the University of Louisville.


    \begin{thebibliography}{99}

\bibitem{AGKSS} C.~Aguilar-Melchor, P.~Gaborit, J.-L.~Kim, L.~Sok, and P.~Sol\'{e},
Classification of extremal and $s$-extremal binary self-dual codes
of length $38$, to appear in {\it IEEE Trans. Inform. Theory}.

\bibitem{BetHarMun} K. Betsumiya, M. Harada and A. Munemasa, A complete classification of doubly even self-dual codes of length 40, arXiv:1104.3727v2, May 31, 2011 (arXiv:1104.3727v1, April 19, 2011).

\bibitem{BonBraDouNocSol}
A. Bonnecaze, A.D. Bracco, S.T. Dougherty, L.R. Nochefranca, P. Sol\'{e}, Cubic self-dual binary codes, {\it IEEE Trans. Inform. Theory.} 49(9) (2003) 2253-2259.

\bibitem{BBGO} I. Bouyukliev, S. Bouyuklieva, T. Aaron
Gulliver and Patric \"{O}sterg\aa rd, Classification of optimal
binary self-orthogonal codes, {\it J. Combin. Math. and Combin.
Comput.} \textbf{59} (2006), 33-87.

\bibitem{Bou97} S. Bouyuklieva, On the automorphism group of the extremal singly-even self-dual codes of length
48, Proceedings of Twenty Sixth  Spring Conference of the Union of
Bulgarian Mathematicians, Bulgaria, 1997, 99-103.

\bibitem{BouBou} S. Bouyuklieva and I. Bouyukliev, On the classification of binary self-dual codes, arXiv:1106.5930v1, June 29, 2011.

\bibitem{Brualdi-Pless} R.A.~Brualdi and V.~Pless, Weight Enumerators of
Self-dual Codes, {\it IEEE Trans. Inform. Theory} \textbf{37}
(1991), 1222-1225.

\bibitem{CPS} J. H. Conway, V. Pless, and N. J. A. Sloane,
Self-dual codes over GF(3) and GF(4) of length not exceeding $16$,
\emph{IEEE Trans. Inform. Theory} \textbf{25} (1979), 312--322.

\bibitem{C-S}
J.H.~Conway and N.J.A.~Sloane, A new upper bound on the minimal
distance of self-dual codes, {\it IEEE Trans. Inform. Theory},
{\bf 36} (1991), 1319--1333.

\bibitem{HanKimLeeLee} S. Han, J.-L. Kim, H. Lee, Y. Lee,
Construction of quasi-cyclic self-dual codes, to appear in {\em Finite Fields and Their Applications}.

\bibitem{Huff}
W.C.~Huffman, Automorphisms of codes with application to extremal
doubly-even codes of length 48, \emph{IEEE Trans. Inform. Theory}
28 (1982) 511-521.

\bibitem{Huf2005}
W.C.~Huffman, On the classification and enumeration of self-dual codes,
{\it Finite Fields and Their Applications}, 11 (2005) 451-490.



\bibitem{PlessSO}V.~Pless,
A classification of self-orthogonal codes over GF(2),
\emph{Discrete Math.} {\bf 3} (1972), 209-246.

\bibitem{RS} E. M. Rains and N. J. A. Sloane, `` Self-dual codes,''
in {\em Handbook of Coding Theory}, ed. V. S. Pless and W. C.
Huffman.  Amsterdam: Elsevier, pp. 177--294, 1998.

\bibitem{Yorgov56} V.Y.Yorgov, A method for constructing inequivalent
self-dual codes with applications to length 56, \emph{IEEE Trans.
Inform. Theory} 33 (1987) 77-82.

\bibitem{Yorus} V.~Yorgov, Binary self-dual codes with an automorphism of odd order,
{\it Problems Inform. Transm.} {\bf 4}, 13--24 (1983).


\end {thebibliography}

\begin{table}[htb]
\centering \caption{Generating permutations and $|\Aut(C)|$ for
codes with $C_{\varphi}= 1_{16}$} \vspace*{0.2in} \label{Table:1}
\small
\begin{tabular}{|l|c||l|c||l|c|}
\hline
\text{permutation}& $|\Aut|$&   \text{permutation}& $|\Aut|$&   \text{permutation}& $|\Aut|$\\
\hline
(5,6)(12,14)&3&(2,3,14,8,12,6)(5,9,11)&6&(2,14,8,12,6)(5,9,11)&6\\
\hline
(3,6,5,12,10,9,11)&6&(3,6,7,5,12,14,15,9,11)&6&&\\
\hline
\end{tabular}
\end{table}

\begin{table}[htb]
\centering \caption{Generating permutations and $|\Aut(C)|$ for
codes with $C_{\varphi}= 1_6+2f_5$} 
\label{Table:2} \small
\begin{tabular}{|l|c||l|c||l|c|}
\hline
\text{permutation}& $|\Aut|$&   \text{permutation}& $|\Aut|$&   \text{permutation}& $|\Aut|$\\
\hline
(1,2,16,14,15,10,13)&3&(1,7,12,13,5,2,10,11,9)&3&(1,12,16,9)(2,10,13,6,5)&3\\
\hline
(2,10,13,4,8,7,12,16,11,9)&3&(2,11,16,5,10,8,12,13)&3&(2,12)(4,6,8,7)(14,16)&3\\
\hline
(2,12,9,11,14,6,7,8)&3&(2,13)(3,7,12,6,11,10)&3&(2,14,12,13,4,6,11,9,8,3)&3\\
\hline
(2,15,3,11,10,13)&3&(2,8,11,14,16,13,4)(6,12)&3&(2,8,11,5,12,13)&3\\
\hline
(2,8,7,15,11,16)(6,14)&3&(2,9,7,12,10,16,13)(6,15)&3&(2,16,8,13)(6,11,7,12,10,9)&3\\
\hline
(3,10)(4,7,11,13)(6,12,8)&3&(2,3,11,10,6,5,8,4,12)&6&(2,8,14,6,5,9,15)&6\\
\hline
(2,8,15,9,11,10,6,5)&6&(2,12,3)(7,10,9,8)&6&&\\
\hline
\end{tabular}
\end{table}

\begin{table}[htb]
\centering \caption{Generating permutations and $|\Aut(C)|$ for
codes with $C_{\varphi}= 4f_4$} 
\label{Table:3} \small
\begin{tabular}{|l|c||l|c||l|c|}
\hline
\text{permutation}& $|\Aut|$&   \text{permutation}& $|\Aut|$&   \text{permutation}& $|\Aut|$\\
\hline
(1,2,10,5,6,11,13,16)(3,4)&3&(1,14,4,10,11,7,9,3)(8,15)&3&(1,16,7)(2,15,14,6)(4,12,5)&3\\
\hline
(1,16,8,6,14,4,11,12,7,3)&3&(2,3,6,9,8,12,7,16,10)&3&(2,3,9)(4,7,10,5)&3\\
\hline
(2,5,4,9,3,14,6)(12,16)&3&(2,5,6,15)(7,10,11)(9,12)&3&(2,6,10,3,9,15,8,12,4,16)&3\\
\hline
(2,6,16,10)(4,15,5)(7,12)&3&(2,6,16,10)(7,9,8,12)&3&(2,6,3,5,12,13)(7,9,10,16)&3\\
\hline
(2,6,5,12,9,8,14,11)&3&(2,7,9,8,12,13)(5,16,10,14)&3&(2,8,12)(4,16,9,15,6,10)&3\\
\hline
(2,8,12,13)(5,14,9,11,15,6)&3&(2,8,4,3,6,11)(5,15)&3&(2,8,7,15,9,11)(5,10)&3\\
\hline
(2,9,11,10,5,8,12,15,6)&3&(2,9,11,8,14,10,6,5,15)&3&(2,9,15,8,12,6,11,5)&3\\
\hline
(2,9,6,5,12,13)&3&(2,9,8,10,14,5,6)&3&(2,9,8,11,6)(5,16,10)&3\\
\hline
(2,9,8,12,3,6,11)&3&(2,9,8,16,14,6,5,12,3,13)&3&(2,9,8,6)(5,10)&3\\
\hline
(2,9,8,7,14,10,16)(6,12)&3&(2,9,8,7,4,12,15,13)&3&(2,10,5,12,9,8,11,15,6)&3\\
\hline
(2,10,5,16,7,6,4,12,15,13)&3&(2,10,5,4,11,15,9,12,6)&3&(2,11,5,4,3,14,6)(9,15)&3\\
\hline
(2,11,9,12)(5,15)(7,10)&3&(2,13)(3,5,4)(6,16,12,14)&3&(2,13)(5,11,12,9,8,14,6)&3\\
\hline
(2,13)(5,12,16,6,7,9,11,10)&3&(2,14,12,13)(3,5,4)(6,16)&3&(2,14,6)(3,11,5,4,15,9)&3\\
\hline
(2,14,7,5,4,15,8,10,11)&3&(2,14,9,12,6)(4,11,10,5)&3&(3,4)(6,15,11,8,7,14)&3\\
\hline
(3,5)(4,12,16,6)(7,15,14)&3&(3,5)(6,15)(7,12)(14,16)&3&(3,5)(6,15,7,12)&3\\
\hline
(3,5)(6,9,15,10,16,7,12)&3&(3,5,15)(6,10,12)(9,11)&3&(3,5,15,6,10)(9,11)&3\\
\hline
(3,5,15,8,16,6,10)(9,11)&3&(3,6,11,9,5,8,12,10)&3&(3,6,9,15,7,12,5)&3\\
\hline
(3,7,10,16,5,15,11,9,8)&3&(3,7,12,6,16,10,15,5)&3&(3,7,4,12,5)(6,15)(10,11)&3\\
\hline
(3,11)(5,15)(7,8,10,9,12)&3&(3,11,6,4,12,7,15,5)&3&(3,11,8,14,6,10,5,15)&3\\
\hline
(3,11,8,15)(5,14)(6,10)&3&(3,14,11,5,4,12,6,15,7)&3&(3,14,6,15,16,7,4,12,5)&3\\
\hline
(3,16)(5,15,9,12)(7,8,10)&3&(3,16,5,15,9,12,7,8,10)&3&(3,16,6,15,9,10)(5,12)&3\\
\hline
(3,16,7,12,9,5)(6,14)&3&(3,16,7,4,15,6,10,12,5)&3&(3,16,9,5,12)(6,10)(7,14)&3\\
\hline
(5,12,7,10,9,8,14)&3&(5,12,9,11,8,14)(6,10)&3&(5,14,16)(6,11,8,15)&3\\
\hline
(5,15,9,11,8,16,6,10)&3&(7,11,12,9,8)&3&(1,15,7,6,10,11,8,14,4,3)&6\\
\hline
(2,5,3,7)(6,16,14)&6&(2,6,12,5,15)(9,11)&6&(2,6,7,15)(5,12)(9,11)&6\\
\hline
(2,7,10,13)(3,9,8,12,14,5,16)&6&(2,7,10,5,16,9,8,12,13)&6&(2,7,15)(5,6,12,16,9,10)&6\\
\hline
(2,7,8,15,9,10,5,12,11)&6&(2,9,11)(5,8,15,14,6,12)&6&(2,9,11,10,5,16,6,12,13)&6\\
\hline
(2,9,11,6,16,5,8,12,13)&6&(2,9,6,5,11,12,13)(8,14)&6&(2,9,6,5,14,8,11,16,12,13)&6\\
\hline
(2,9,8,14,6,12)(7,15)&6&(2,9,8,14,6,15)(7,12)&6&(2,13)(5,12,9,11)(6,16,14)&6\\
\hline
(2,13)(5,16,7,10)(6,12,9)&6&(2,13)(5,6,12)(7,16)(9,10,11)&6&(2,13)(5,8,12,9,11)(6,16)&6\\
\hline
(2,13)(5,8,16,6,12,9,11)&6&(2,14)(5,12,9,8,15,6,11)&6&(2,14,6,15,9,11)(5,8,12)&6\\
\hline
(2,16,7,10,5,15,9,8,12)&6&(3,5,12,7,10,16)(8,15,9)&6&(3,5,8,15,6,12,9,11)&6\\
\hline
(3,6,11,14,9,8,12)(5,15)&6&(3,7,12,5,4)(6,15,14)&6&(3,7,5,4,12,13)(6,14)&6\\
\hline
(3,9,10,5,6,15,11)(7,12)&6&(3,9,12,7,10,11)(4,15,5)&6&(3,9,8,15,7,10,11)(5,12)&6\\
\hline
(3,10,11,6,12,5)(7,15)&6&(3,11,5,15,9,8,12,7,10)&6&(3,11,7,12,14,6,15,5)&6\\
\hline
(3,11,7,12,6,9,15,10,5,13)&6&(3,11,7,15,5,4)(6,12)&6&(3,11,7,15,6,12,5,4)&6\\
\hline
(3,11,7,9,15,6,12,10,5,13)&6&(4,11,9)(5,15,6,12)&6&(5,6,12)(7,15,9,10,11)&6\\
\hline
(5,8,15,10,6,12,9,11)&6&(5,10)(6,16,7,9,11,8,14)&6&(5,15)(7,8,12,9,10,11)&6\\
\hline
(5,15,7,8,12,9,10,11)&6&(1,16,7)(2,15,5,4,12,14,6)&12&(3,9,12,14,6,11)(4,15,5)&12\\
\hline
(5,15,6,11,9,8,12)&12&(2,6,12,7,16,13)(3,5)&24&(3,6,7,12,9,11,10)(5,15)&24\\
\hline
(2,5,3,6,12,7,16,13)&48&&&&\\
\hline
\end{tabular}
\end{table}

\begin{table}[htb]
\centering \caption{Generating permutations and $|\Aut(C)|$ for
codes with $C_{\varphi}= 2f_8$} 
\label{Table:4} \small
\begin{tabular}{|l|c||l|c||l|c|}
\hline
\text{permutation}& $|\Aut|$&   \text{permutation}& $|\Aut|$&   \text{permutation}& $|\Aut|$\\
\hline
(1,11,12,5,3,13,2,8,15,9)&3&(1,12,3,7,10)(4,5,14,13,16)&3&(2,3,11,12,8,7,5)&3\\
\hline
(2,5,11,12,3,6)(9,10)&3&(2,5,11,6)(8,14,12,15)&3&(2,5,3,10,6,4,11)(7,15)&3\\
\hline
(2,5,3,6)(8,15,16,9,10)&3&(2,5,4,12,3,6)(9,16,10)&3&(2,5,4,3,6)(9,15,16,10)&3\\
\hline
(2,5,7,3,6,15,9,11)&3&(2,5,8,11,12,3,6)(9,10)&3&(2,5,8,15,10,11,6)&3\\
\hline
(2,5,8,15,11,3,6)&3&(2,6)(3,5)(8,15,16,9,10)&3&(2,6)(3,5,4)(9,11,12)&3\\
\hline
(2,6)(3,5,4,15,16,10,9)&3&(2,6)(3,5,4,16,14,9,12)&3&(2,6)(3,5,7)(9,15,10,16)&3\\
\hline
(2,6)(3,5,8,4,10,11,12)&3&(2,6)(3,8,16,9,10,5,12)&3&(2,8,12,3,6)(5,16,9,10)&3\\
\hline
(2,8,4,10,5,11,12,3,6)&3&(2,8,4,12,3,6)(5,11)&3&(2,9,10,11,5)(3,8,7,12)&3\\
\hline
(2,9,11,5,15,8,3,6)&3&(2,9,11,5,3,6)(8,15)&3&(2,9,11,6,8,16,5,12)&3\\
\hline
(2,9,11,8,3,5,15,16,14,6)&3&(2,9,15,5,3,8,11,12,14,6)&3&(2,9,8,14,5,10,6)(15,16)&3\\
\hline
(2,9,8,16,5,12,11,6)&3&(2,9,8,5,6)(12,14,15)&3&(2,9,8,5,6)(12,16,15)&3\\
\hline
(2,9,8,5,6)(15,16)&3&(2,9,8,6)(12,16,15)&3&(2,10,7,9,15)(3,6,11,5)&3\\
\hline
(2,11,12,8,7,5)&3&(2,11,14,15,9,4,5,7)&3&(2,11,3,5,8,4)(6,12)&3\\
\hline
(2,11,6)(5,12,9,8,16)&3&(2,12,16,10,3,9,7,5,4)&3&(2,15)(5,7,10)(6,14)(9,11)&3\\
\hline
(2,15,16,10,9,5,4,6)&3&(2,15,16,11,9,10,5,6)&3&(2,15,16,14,10,8,6,7)&3\\
\hline
(2,15,16,14,6)(4,10,5)&3&(2,15,16,14,6,12,5,4)&3&(2,15,16,14,8,6,7)&3\\
\hline
(2,15,16,9,10,5,14,6)&3&(2,15,16,9,10,5,14,6,7)&3&(2,15,16,9,10,5,6,7)&3\\
\hline
(2,15,5,12,6)(8,16)(9,10)&3&(2,15,5,14,6,7)(9,11,12)&3&(2,15,5,16,8,12,9,11,13,4)&3\\
\hline
(2,15,5,8,10,6)(9,11,12)&3&(2,15,6)(4,11,9)(5,16)&3&(2,15,6)(5,10)(8,14,9,11)&3\\
\hline
(2,15,6)(5,10,9,11,8,14)&3&(2,15,6)(5,11)(8,14)(9,10)&3&(2,15,6)(5,12)(8,16)(9,10)&3\\
\hline
(2,15,8,10,13,4)(5,14,16)&3&(2,15,8,6)(5,12,9,10,11)&3&(3,5)(4,11,7,15,16,6)&3\\
\hline
(3,5,12,6,11,9,8,16)&3&(3,6,4,11,14,5)(7,15,10)&3&(3,7,9,15,6,11,14,12,5)&3\\
\hline
(3,12,5,7)(6,15)(9,11)&3&(3,15,6,8,10,5,14,9,11)&3&(3,16,14,5)(6,11,7,9,15)&3\\
\hline
(3,16,5)(6,8,15,9,11)&3&(3,16,9,11,5,7,15,6)&3&(4,7,11,12,6)(5,15)&3\\
\hline
(5,7,12,9,10)&3&(5,7,14)(6,10)(9,11)&3&(5,15)(6,9,11,12)&3\\
\hline
(1,2,14,15,16,8,11,13,9)&6&(1,2,15,16,13,7,5,10,14,9)&6&(1,2,9)(5,15,11,12,13,7)&6\\
\hline
(1,15,13,12,9)(5,7,10,11)&6&(1,15,9)(5,6,12)(10,13)&6&(2,5,3,6)(8,15)(9,11,10)&6\\
\hline
(2,5,6,10,16)(4,12,13)(9,15)&6&(2,6)(3,14,5,9,15,16)&6&(2,6)(5,10,8,16)(9,11)&6\\
\hline
(2,6,11,5,16,15,8)(9,10)&6&(2,6,7,11)(5,12)(9,10)&6&(2,6,7,8,9,11,12,14)&6\\
\hline
(2,6,8,11,13,4)(3,5)&6&(2,7,11,12)(5,9,10)(6,14)&6&(2,7,11,12)(5,9,10,6,14)&6\\
\hline
(2,8,10,3,13,4)(5,15,16)&6&(2,8,10,5,15,16,3,13,4)&6&(2,9,11,5,3,8,15,14,6)&6\\
\hline
(2,11,12,13,4)(5,10)(8,15)&6&(2,11,12,3,13,4)(5,10,9,7)&6&(2,12,10,16,6,7,8,4)&6\\
\hline
(2,12,16,9,4,10,6)(8,14)&6&(2,15,11,12,10,9,8,5,6)&6&(2,15,11,12,14,6,8,13,4)&6\\
\hline
(2,15,16,10,13,4,11,8,6)&6&(2,15,16,14,6,8,4)&6&(2,15,16,6)(4,10,8,14,9)&6\\
\hline
(2,15,5,11,6)(4,16,8)&6&(2,15,5,9,10,11,6)&6&(2,15,9,4,5,7,10,11,12)&6\\
\hline
(3,5,9,11)(6,10,12)(7,14)&6&(3,9,5,7,10,11,12)&6&(3,16)(5,8,9,15)(6,12)&6\\
\hline
(3,16)(5,9,15)(6,12)&6&(3,16)(5,9,15,6,12)&6&(3,16,15,6,7,9,11)&6\\
\hline
(5,9,10,6,11,12,7,14)&6&(5,9,10,6,14)(7,11,12)&6&(5,9,11,12,7,14)(6,10)&6\\
\hline
(1,16,7,14,6,11,8,2,4,3)&12&(1,15,3,9)(5,7,10,11,13,12)&24&(2,5,7,10,11,13,4,12,9,15)&24\\
\hline
(4,15,5,7,10,16,13)(9,12)&24&&&&\\
\hline
\end{tabular}
\end{table}

\end{document}